\documentclass[12pt,
nofootinbib,
 amsmath,amssymb,
 aps, article
]{revtex4-2}
\usepackage{amsmath}
\usepackage{amsthm}
\usepackage[english]{babel}
\usepackage{float}
\usepackage{scrextend}
\usepackage[bottom]{footmisc}
\usepackage{comment}
\usepackage{graphicx}
\usepackage{dcolumn}
\usepackage{bm}
\usepackage{setspace}
\usepackage{hyperref}
\usepackage{mathrsfs}
\setstretch{1.2}

\newtheorem{theorem}{Theorem}
\newtheorem{proposition}{Proposition}
\newtheorem{definition}{Definition}
\newtheorem{corollary}{Corollary}
\newtheorem{conjecture}{Conjecture}

\numberwithin{equation}{section}

\makeatletter
\def\l@subsubsection#1#2{}
\makeatother

\begin{document}

\title{The torus one-point block of 2d CFT and null vectors in $\hat{\mathfrak{sl_2}}$} 

\author{Dario Stocco}

\email{dario.stocco@ipht.fr}
\affiliation{École Polytechnique Paris, ETH Zürich}

\author{Advisor: Sylvain Ribault}
\email{sylvain.ribault@ipht.fr}
\affiliation{IPhT, CEA Saclay}

\date{\today}

\begin{abstract}
This thesis is divided into two parts, where in the first part we investigate the computation of Virasoro 1-point blocks on the torus in the framework of Zamolodchikov's recursion relation. It is widely accepted that this recursion relation contains unphysical poles in the central charge $c$. At each order we conjecture how the pole free expressions depend on the internal and external conformal dimensions and central charge, and propose how to compute it numerically. In this thesis, we have calculated the pole free expression up to order 4. 
In the second part we introduce a conformal field theory with an extra symmetry, described by highest weight representations of the affine Lie algebra $\hat{\mathfrak{sl}}_2$. At level 1, we determined a universal $\mathfrak{sl}_2$ basis-independent 'null operator', which generates null vectors in the usual sense. The 'null operators' are generalized objects and can be applied to any state of the horizontal representation, yielding null vectors and are therefore independent from the choice of horizontal representations.
\end{abstract}

\maketitle

\newpage

\tableofcontents

\newpage

\noindent \textbf{Introduction} 

Two dimensional conformal field theories (shortand \textbf{CFT} or \textbf{2d CFT}) are examples of quantum field theories with a very large amount of symmetry. This large amount of symmetry occurs in two dimensions because of the connection to holomorphic functions and drastically restricts the theory. Those restrictions allows, under certain circumstances, to calculate the correlation functions non-pertubartively and without a Lagrangian description. Therefore 2d CFTs are of large interest and widely used in modern theoretical physics. For instance on one side, it is used to describe statistical phenomenas, like the critical Ising model or percolation theory and on the other side it is very important to study string theories.

Let me give an overview how the thesis is structured. In the first section (\ref{section:cft}) we introduce into important concepts and results of the bootstrap approach to 2d CFTs. Section (\ref{section:recursion}) studies the poles in the central charge $c$ of the recursive representation of 1-point blocks on the torus. I conjecture the functional form of the pole-free expression and propose a method to calculate the unknown coefficients. In the last section (\ref{section:affine}) CFTs with an additional symmetry, described by the affine algebra $\hat{\mathfrak{sl}}_2$, are introduced and the corresponding representation theory is dicussed. I focus on the study of degenerate representations, i.e. on null vectors within highest weight representation. Then the null vectors are applied to three point functions to calculate fusion rules between affine degenerate fields and affine primary fields (\ref{fusionsl2}).

\section{Overview: conformal field theories} \label{section:cft}

Let us introduce in this section the main parts of 2d CFTs. We will give an axiomatic based introduction and follow closely the book of Sylvain Ribault \cite{ribaultplane} and the introductory notes of Bert Schellekens \cite{schellekens}.
\\
The underlying space of the CFTs, discussed in this thesis, are the compactified complex plane $\mathbb{C} \cup \{\infty \}$, known as Riemann sphere and the complex torus $\frac{\mathbb{C}}{\mathbb{Z} + \tau \mathbb{Z}}$ with modular parameter $\tau \in \mathbb{C} - \mathbb{R}$. The discussion mainly concentrates on the Riemann sphere unless stated otherwise.

\subsection{Symmetry algebra}

In the euclidean case we introduce the metric $\mathrm{d}s^2 = \mathrm{d}z \mathrm{d}\bar{z}$ on the Riemann sphere. We consider field theories invariant under conformal transformations, which prominently includes scale transformations $z \mapsto \lambda z$, translations $z \mapsto z + a$ and rotations $z \mapsto e^{i\phi}z$. In general conformal transformations contain any angle preserving coordinate transformations i.e. $g \mapsto \Omega(z)g$ when $z \mapsto f(z)$. Locally this is fullfilled by any holomorphic function $(z,\bar{z}) \mapsto (f(z), \bar{f}(z))$, with $\Omega(z) = f'(z)\bar{f}'(z)$. Here $\bar{z}$ is the complex conjugate of $z$. Expanding a conformal transformation $f(z)$ around the identity map $f(z) = z + \epsilon(z)$, yields the infinite dimensional Witt algebra identified by the holomorphic/antiholomorphic generators $l_n := - z^{n+1}\partial_z$/$\bar{l}_n := - \bar{z}^{n+1}\partial_{\bar{z}}$, with commutation relations,
\begin{gather}
    [l_n,l_m] = (n-m)l_{n+m}, \quad [\bar{l}_n,\bar{l}_m] = (n-m)\bar{l}_{n+m}, \quad [l_n,\bar{l}_m] = 0.
\end{gather}

The global, angle preserving transformations are given by the Möbius group $\mathrm{PSL}_2(\mathbb{C}) = \frac{SL_2(\mathbb{C})}{\{id, -id\}}$ and the identification,
\begin{equation}
     \begin{pmatrix}
     a & b \\
     c & d
     \end{pmatrix}
     \mapsto \frac{az+b}{cz+d}.
\end{equation}

A common rescaling of the matrix elements doesn't change the transformation. Hence it is enough to consider only matrices with determinant one and identify $[g] = \{g, -g\}$ i.e. $g \in \mathrm{PSL}_2$. A change of phase doesn't change obersvables in quantum field theories and therefore we consider Lie group representations $R$ with an additional phase, depending on the multiplication of group elements $R(g)R(g') = e^{i\phi(g,g')}R(gg')$. At the level of the Lie algebra this extra phase is accounted by its central extension. In our case the unique central extended Witt algebra is the Virasoro algebra $\mathfrak{V}$. Formally the Virasoro algebra is defined as the complex vector space spanned by $\{L_{n \in \mathbb{Z}}, c\}$ with Lie brackets,
\begin{gather} \label{virasoro}
    [L_n, L_m] = (n-m)L_{n+m}  + \frac{c}{12} m(m^2 - 1), \quad [L_n, c] = 0.
\end{gather}

$c \in \mathbb{C}$ is the central charge of the theory and commutes with all other elements. Vice versa we have a copy of $\mathfrak{V}$ with the same central charge, to account for the anti-holomorphic part with $[\bar{L}_n,L_m] = 0$. The complete symmetry algebra of space-time transformations of CFTs is the direct sum $\mathfrak{V} \oplus \bar{\mathfrak{V}}$.

\subsection{Representations and fields}

So far we equipped the CFT with a space and a symmetry algebra. In order to be able to calculate observables, we equip it further with representations of $\mathfrak{V} \oplus \bar{\mathfrak{V}}$ and fields transforming in this representations. Vaguely spoken, the space of representations is called the spectrum $\mathbb{S}$ of the CFT. For simplicity we only investigate representations of one copy of the Virasoro algebra and tensor them to get representations of the complete symmetry algebra $\mathfrak{V} \oplus \bar{\mathfrak{V}}$.

\textit{Remark:} From now on we do not distinguish between the action $R(g)$ of the representation and the algebra element $g \in \mathfrak{g}$. Similiarly we do not distinguish between the representation $R$ and the vector space $V_R$ it acts on.

\subsubsection{Highest weight representation}

\textit{Axiom:} We demand that $L_0$ is diagonalizable. 

Let's introduce the Verma module with highest weight vector $|\Delta\rangle$,
\begin{equation} \label{virverma}
    \mathcal{V}_{\Delta} := U(\mathfrak{B}^-)|\Delta\rangle,
\end{equation}

where $U(\mathfrak{B}^-)$ is the universal envelope of the Virasoro algebra with only negative indices $L_{n<0}$. We equip the space $U(\mathfrak{B^-})$ with the basis $\mathfrak{L}:= \{L_N := L_{-n_1} \cdot ... \cdot L_{-n_M}\}_{1\leq n_1 \leq ... \leq n_M}$ and call the integer $N := - \sum_{i=1}^M n_i$ the level. Correspondingly we say that the vector $L_{-n_1} \cdot ... \cdot L_{-n_M}|\Delta \rangle$ is at level $N$.
The action of $\mathfrak{V}$ on the Verma module is then given by,
\begin{gather} \label{virverma2}
    L_{n>0} |\Delta \rangle = 0, \quad L_0 |\Delta\rangle = (\Delta \in \mathbb{C}) |\Delta\rangle.
\end{gather}

The highest weight vector $|\Delta\rangle$ of the representation (\ref{virverma2}) is also called primary state and $\Delta$ its conformal dimension. We will see in section (\ref{conformaltransformation}) the role of the conformal dimension in conformal transformations of primary fields.
With respect to $L_0$ the descendant states at level $N>0$, have eigenvalue $L_0 L_N|\Delta\rangle = (\Delta + N)L_N|\Delta\rangle$. This implies that the real-part of the $L_0$-eigenvalues is bounded from below: $Re(\Delta + N) \geq Re(\Delta)$. We interpret the operator $H \sim L_0 + \bar{L_0}$ as the hamiltonian of the theory and therefore the spectrum is stable, because of the lower bound on the energies $\Delta$.

\subsubsection{Degenerate representations} \label{degrep}

In this section we work out irreducible Verma modules and degenerate representations. In section (\ref{virasorofusion}) we will then see how degenerate representations constrain correlation functions. Verma modules (\ref{virverma}) are reducible if they contain null vectors,

\begin{definition}[Null vector]
    A vector $|\chi,N\rangle := \sum_{|M| = N} a_M L_{-M} |\Delta\rangle$, fullfilling the condition $L_{n>0}|\chi,N\rangle = 0$ is called a null vector at level $N>0$.
\end{definition}

A null vector $|\chi,N\rangle$ generates an invariant subspace $U(\mathfrak{B}^-)|\chi,N\rangle \subset \mathcal{V}_{\Delta}$ and therefore the representation is decomposable. There is a very helpful theorem, connecting null vectors and irreducibility,

\begin{theorem}
    The Verma module is irreducible if and only if it contains no null vectors.
\end{theorem}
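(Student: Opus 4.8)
The plan is to prove the contrapositive equivalence: the Verma module $\mathcal{V}_{\Delta}$ is reducible if and only if it contains a null vector. One direction is essentially the observation already recorded above the statement. If a null vector $|\chi,N\rangle$ exists at some level $N>0$, then by the Definition it satisfies $L_{m>0}|\chi,N\rangle = 0$, and being a homogeneous level-$N$ combination it is also an $L_0$-eigenvector with eigenvalue $\Delta + N$. Hence $|\chi,N\rangle$ is itself a highest weight vector, and the submodule $U(\mathfrak{B}^-)|\chi,N\rangle$ it generates is invariant under the full Virasoro algebra: one commutes positive modes to the right through the relations (\ref{virasoro}) until they either annihilate $|\chi,N\rangle$ or produce $L_0$ and central terms, so the result stays in the submodule. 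This submodule is nonzero but proper, since it lives entirely at levels $\geq N \geq 1$ and cannot contain the primary state $|\Delta\rangle$; therefore $\mathcal{V}_{\Delta}$ is reducible.

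For the converse, that reducibility forces the existence of a null vector, the central tool will be the $L_0$-grading $\mathcal{V}_{\Delta} = \bigoplus_{N\geq 0} \mathcal{V}_{\Delta}^{(N)}$ by level, where each $\mathcal{V}_{\Delta}^{(N)}$ is the finite-dimensional eigenspace of $L_0$ with eigenvalue $\Delta + N$. The first step is to show that every invariant submodule $W \subset \mathcal{V}_{\Delta}$ is graded, i.e. $W = \bigoplus_N \bigl( W \cap \mathcal{V}_{\Delta}^{(N)} \bigr)$. This is where the diagonalizability axiom on $L_0$ enters: the eigenvalues $\Delta + N$ are pairwise distinct, so writing any $w \in W$ as a finite sum of $L_0$-eigencomponents $w = \sum_{N} w_N$ and applying the powers $L_0^j w \in W$, a Vandermonde argument on the distinct eigenvalues recovers each $w_N$ as a linear combination of the $L_0^j w$, hence $w_N \in W$.

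The second step isolates the null vector. Given a proper nonzero invariant submodule $W$, gradedness forces $W \cap \mathcal{V}_{\Delta}^{(0)} = 0$, since otherwise $W$ would contain $|\Delta\rangle$ and thus equal $\mathcal{V}_{\Delta}$. Let $N \geq 1$ be the smallest level at which $W$ is nonzero and choose $0 \neq v \in W \cap \mathcal{V}_{\Delta}^{(N)}$. For any $m>0$ the vector $L_m v$ lies in $W \cap \mathcal{V}_{\Delta}^{(N-m)}$; by minimality of $N$ this intersection vanishes when $0 < N-m < N$, it equals $W \cap \mathcal{V}_{\Delta}^{(0)} = 0$ when $m = N$, and $L_m v$ vanishes trivially when $m > N$. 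Hence $L_{m>0} v = 0$, so $v$ is exactly a null vector at level $N$, which completes the argument.

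I expect the main obstacle to be the gradedness of submodules in the second paragraph: it is the single place where the $L_0$-diagonalizability axiom is genuinely needed, and some care is required to justify that the eigenspace decomposition of the ambient module restricts cleanly to the submodule. Once that structural fact is in place, everything else reduces to bookkeeping with the level grading and the commutation relations, so I would concentrate the careful writing there and treat the remaining steps as routine.
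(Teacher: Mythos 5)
Your proof is correct, but it is worth noting that the paper itself does not actually prove this theorem: it only states it, and the sentence immediately preceding it (``a null vector generates an invariant subspace $U(\mathfrak{B}^-)|\chi,N\rangle$'') covers just the easy direction, with the converse left implicit. Your argument supplies both directions in full. The forward direction matches the paper's one-line remark, fleshed out with the standard PBW observation that commuting positive modes to the right shows $U(\mathfrak{B}^-)|\chi,N\rangle$ is stable under all of $\mathfrak{V}$, and that properness follows from the level grading. The converse is where you add genuine content: the key lemma that every submodule $W$ is $L_0$-graded, proved via the Vandermonde argument on the distinct eigenvalues $\Delta+N$, is exactly the right structural input, and it correctly isolates where the diagonalizability axiom is used. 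The subsequent extraction of a null vector from the lowest nonzero graded piece of $W$ is also sound: for $0<m\leq N$ minimality kills $L_m v$, and for $m>N$ it vanishes because the Verma module has no negative levels. One small point you may wish to make explicit is that a ``null vector'' is tacitly assumed nonzero (otherwise the statement is vacuous), and that the vector $v$ you produce, being homogeneous of level $N$, is indeed of the form $\sum_{|M|=N}a_M L_{-M}|\Delta\rangle$ demanded by the paper's definition, which follows from the PBW basis of $\mathcal{V}_{\Delta}$.
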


Conversely we build irreducible representations (shorthand \textit{irreps}) by taking the quotient of the Verma module with all invariant subspaces generated by all null vectors. The quotient space is called degenerate representation.
For any two positive integers $r,s$ forming the product $rs = N$, exist $p_N := \sum_{rs = N} 1$ inequivalent null vectors $|\chi_{r,s}\rangle$ at level $N = rs$,
\begin{gather} \label{virdeg}
    |\chi_{r,s}\rangle = L_{r,s}|\Delta_{r,s}\rangle, \quad L_{r,s} = \sum_{|M| = N} d_M(c,r,s) L_M.
\end{gather}

The coefficients $d_M$ depend on the central charge $c$ and the integers $r,s$. For example in the case of level 2 null vectors we have,
\begin{equation}
    L_{2,1} = L_{-1}^2 + b^2 L_{-2}, \quad L_{1,2} = L_{-1}^2 + b^{-2} L_{-2},
\end{equation}

where $b$ is introduced in (\ref{I.7}).
Moreover the conformal dimension $\Delta_{<r,s>} = \Delta (P_{<r,s>})$ in (\ref{virdeg}) is determined by the null vector condition on $|\chi_{r,s}\rangle$ and given in terms of the central charge $c$ by the parametrization,
\begin{gather} \label{I.7}
    \Delta(P) := \frac{c - 1}{24} - P^2, \quad P_{<r,s>} := \frac{1}{2}(rb + sb^{-1}), \quad c := 13 + 6 b^2 + 6 b^{-2}.
\end{gather}

$P$ is called the momentum and $b$ the coupling constant. A widely used relation is that $\Delta_{<r,-s>} = \Delta_{<r,s>} + rs = \Delta_{<-r,s>}$, where $\Delta_{<r,-s>}$ is the conformal dimension of the null vector $|\chi_{r,s}\rangle$. In terms of the coupling constant $P_{<r,s>}(b)$ and $\Delta_{<r,s>}(b)$ are invariant under the simultaneous exchange of $r \leftrightarrow s$ and $ b \leftrightarrow b^{-1}$ (Let's name the transformation $b \rightarrow b^{-1}$ \textbf{$b$-symmetry}). 
For generic values of $c$, the null vector $|\chi_{r,s}\rangle$ is the only one in the Verma module $\mathcal{V}_{\Delta_{<r,s>}}$ and thus the quotient space,
\begin{equation}
    R_{r,s} := \frac{\mathcal{V}_{\Delta_{<r,s>}}}{\mathcal{V}_{\Delta_{<r,-s>}}},
\end{equation}

forms a \textit{single} degenerate representation. 

\noindent \textbf{A-series minimal models:}
A very prominent example of CFTs are the A-series minimal models, including the critical Ising model. Their spectra are made of \textit{doubly} degenerate representations, who contain two null vectors and therefore satisfy $\Delta_{<r,s>} = \Delta_{<r',s'>}$ for some $(r,s) \neq (r',s')$. This condition on the conformal dimension determines the central charge $c_{p,q} = 1 - 6 \frac{(q-p)^2}{pq}$ for $p,q$ coprime integers. In particulary A-series minimal models demand the extra condition $2 \leq p, q$ otherwise their spectrum is empty. For example the critical Ising model has central charge $c = \frac{1}{2}$ with $p = 4, q = 3$.

\subsubsection{State field correspondence}

We demand that there exists a one to one map $\mathfrak{i}$ of fields $V(z)$ and states in the spectrum $\mathbb{S}$,
\begin{gather}
    \mathfrak{i}: |v\rangle \mapsto V_{|v\rangle}(z).
\end{gather}

$\mathfrak{i}$ extends the action of $\mathfrak{V}$ onto fields by $L^{(z)}_n V_{|v\rangle}(z) := V_{L_n|v\rangle}(z)$.
In string theory the correspondence can be thought of as having the space of states at the origin of the Riemann sphere $z = 0$, corresponding to the far past on the string worldsheet. The state is translated by $V_{|v\rangle}(z) = e^{-izL_{-1}} V_{|v\rangle}(0) e^{izL_{-1}}$ to any point on the sphere.
\newline
A general representation of the symmetry algebra $\mathfrak{V} \oplus \bar{\mathfrak{V}}$ is realized by tensoring two Verma modules (or equivalently degenerate representations),
\begin{gather} \label{fieldrep}
    \begin{aligned}
        \mathfrak{V} \oplus \bar{\mathfrak{V}} &\rightarrow End(\mathcal{V}_{\Delta} \otimes \mathcal{V}_{\bar{\Delta}}) \\
        L_n + \bar{L_m} &\mapsto (L_n +\bar{L_m})V_{\Delta,\bar{\Delta}}(z) := V_{L_n|\Delta\rangle \otimes |\bar{\Delta}\rangle + |\Delta\rangle \otimes \bar{L_m}|\bar{\Delta}\rangle}(z).
    \end{aligned}
\end{gather}

The field $V_{\Delta,\bar{\Delta}}(z)$ is called \textbf{primary field} and corresponds to the highest weight state $|\Delta\rangle \otimes |\bar{\Delta}\rangle \in \mathcal{V}_{\Delta} \otimes \mathcal{V}_{\bar{\Delta}}$. We call fields diagonal if $\Delta = \bar{\Delta}$ and write $V_{\Delta}$. On the other side a CFT is diagonal if the spectrum contains only diagonal fields. Be careful with the expression $\bar{\Delta}$, as it is not the complex conjugate of $\Delta$ but the conformal dimension corresponding to $\bar{\mathfrak{V}}$.

\textit{Axiom:} We introduce the additional axiom, that the generator $L_{-1}$/$ \bar{L}_{-1}$ acts on fields as the derivative in $z$/$\bar{z}$.

There exists an exceptional field, the Virasoro field $T(y)$, which generates the symmetry algebra $\mathfrak{V}$. The Virasoro field is identified as the energy momentum tensor of conformal symmetries. It is defined implicitly, such that its $n+1$'th moment at $z$ is the generator $L^{(z)}_{n}$ (conversely for $\bar{T}$ generating $\bar{L}^{(z)}_{n}$),
\begin{equation}
    L^{(z)}_n = \frac{1}{2\pi i} \oint_{z} dy (y-z)^{n+1}T(y).
\end{equation}

Therefore $T(y)$ can be written in terms of its moments,
\begin{equation} \label{tmoments}
    T(y) = \sum_{i \in \mathbb{Z}} \frac{L_n^{(z)}}{(y-z)^{n+2}}.
\end{equation}

\textit{Axiom:} $T(y)$ is holomorphic and behaves as $O(y^{-4})$ at $y = \infty$.

On the quantum level the fields are operators, where we are interested in the operator product expansion (shorthand \textbf{OPE}) of $T$ with a primary field $V_{\Delta}$ and $T$ with itself,
\begin{gather} \label{TVOPE}
    \begin{aligned}
    T(y)V_{\Delta,\bar{\Delta}}(z) &\underset{y \rightarrow z}{=} \frac{\Delta V_{\Delta,\bar{\Delta}}(z)}{(y-z)^2} + \frac{\partial_z V_{\Delta,\bar{\Delta}}(z) }{(y-z)} + O(1) \\
    T(y)T(z) &\underset{y \rightarrow z}{=} \frac{c/2}{(y-z)^4} + \frac{2T(z)}{(y-z)^2} + \frac{\partial_z T(z)}{(y-z)} + O(y-z).
    \end{aligned}
\end{gather}

In our case the OPEs (\ref{TVOPE}) have finite radius of convergence and is therefore well defined. The $TT$-OPE is equivalent to the Virasoro commutation relation (\ref{virasoro}) and the $TV$-OPE is generic for primary fields.

\subsection{Observables}

Up to this point we have introduced the underlying space, the symmetry algebra, the representations and fields. Now its time to discuss the observables of the CFT.

Beforehand we introduce a generic notation of the complex modulus, which is often used because of the factorizable left and right chiral representation $\mathfrak{V} \oplus \bar{\mathfrak{V}}$,
\begin{equation}
    |f(\Delta,z,L_n)| ^2 := f(\Delta,z, L_n) f(\bar{\Delta},\bar{z},\bar{L}_n).
\end{equation}

\subsubsection{Correlation functions}

We define the observables to be correlation function of N fields (shorthand \textbf{N-point function}). The correlator is a multilinear function of fields, depending on the conformal dimensions and their position. In the bosonic case the correlator transforms under the one dimensional symmetric representation of the symmetric group $S_N$. Diagonal CFTs are bosonic and contain only spin zero fields (see \ref{conformaltransformation}). We write the N-point function as,

\begin{equation}
    \langle \prod_{i = 1}^N V_{\sigma_i,\bar{\sigma}_i}(z_i)\rangle,
\end{equation}

where we denoted by $(\sigma,\bar{\sigma})$ to indicate any field in the representation (\ref{fieldrep}). The N-point functions are restricted by  Ward identities, which are determined by the Virasoro symmetry of conformal field theory. Ward identities can be obtained by a weighted contour integral of the meromorphic correlator $Z(y) := \langle T(y)\prod_i V_{\sigma_i}(z_i)\rangle$ around all poles $z = z_i$. The weight $\epsilon(y)$ is such that it has no poles outside $\{z_i\}_i$,

\begin{equation} \label{wardint}
    \oint_{\infty} \mathrm{d}y \epsilon(y) Z(y) = 0, \quad \epsilon \underset{y \rightarrow \infty}{\leq} O(y^2).
\end{equation}

This integral is evaluated using the $TV$-OPE (\ref{TVOPE}) and the residue theorem of complex analysis.

\textit{Infinitesimal transformations:} A general result of quantum field theory is that an infinitesimal transformation $f(z) = z + \epsilon(z)$ of a generic field $\Psi(z)$ is given by \cite{schellekens}, 

\begin{equation}
    \delta_{\epsilon}\Psi(z) = \frac{1}{2 \pi i} \oint_{z} dy \epsilon(y) T(y) \Psi(z).
\end{equation}

This statement motivates the use of $Z(y)$ and connects it to conformal transformations of N-point functions.

\subsubsection{Global Ward identities} \label{conformaltransformation}

Let's consider $g \in \mathrm{PSL_2(\mathbb{C})}$ a Möbius transformation and the corresponding infinitesimal transformations $\epsilon \in \{1,y,y^2\}$ in equation (\ref{wardint}). We can deduce how a primary field transforms under $g$,
\begin{equation} \label{fieldtransform}
    V_{\Delta,\bar{\Delta}}'(z) := T_g V_{\Delta,\bar{\Delta}}(z) = |(cz+d)^{-2\Delta}|^2V_{\Delta,\bar{\Delta}}(gz),
\end{equation}

with the action of $PSL_2$ on the space $gz := \frac{az+b}{cz+d}$. 
Under global conformal transformation N-point functions stay invariant, as a consequence of exponentiating the Ward identities (\ref{wardint}) with $\epsilon \in \{1,y,y^2\}$. Therefore the classical conformal invariance transmits anomaly free to the quantum theory,
\begin{equation} \label{conformal_invariance}
    \langle \prod_{i=1}^N V_{\Delta_i,\bar{\Delta_i}}(z_i) \rangle = \langle \prod_{i=1}^N T_g V_{\Delta_i,\bar{\Delta_i}}(z_i) \rangle.
\end{equation} 

The $\mathrm{PSL}_2$ invariance of N-point functions imply, that within correlation functions three coordinates can be choosen freely. Especially 2- and 3-point functions are determined on the whole sphere, if they are known at some fixed coordinates.

The invariance equation (\ref{conformal_invariance}) is generalized to local conformal i.e. holomorphic transformations $z \rightarrow h(z)$,
\begin{equation}
    \langle \prod_{i=1}^N V_{\Delta_i,\bar{\Delta_i}}(z_i) \rangle = \prod_{i=1}^N h'(z) \langle \prod_{i=1}^N V_{\Delta_i,\bar{\Delta_i}}(h(z_i)) \rangle.
\end{equation}

Outside the domain of biholomorphism of $h(z)$, especially at $h^{'}(z_0) \in \{0,\infty\}$, descendant states $L_{n\leq -2}V_{\Delta}$ appear in the correlation function. But we do not go into more details.

\textit{Spin of a primary:} Under space rotations $f(z) = e^{i\phi}z$ the field transforms as $V_{\Delta,\bar{\Delta}}(z) \rightarrow e^{i \phi (\Delta - \bar{\Delta})}V_{\Delta,\bar{\Delta}}(z)$. Let $S := \Delta - \bar{\Delta}$ be the spin of the primary field. For $S \in \frac{1}{2} + \mathbb{Z}$ the field flips sign under $2 \pi = \phi$ rotation and is thus fermionic. For $S \in \mathbb{Z}$ the field is bosonic. Generally the spin can take any value but for the ongoing discussion we dont specify it or consider diagonal fields i.e. spin zero.

\textit{Fields at the north pole:} Another subtlety is how the primary fields and its descendents behave analytically at the point $z = \infty$. From the transformation properties (\ref{fieldtransform}) we deduce that $V_{\Delta}(\infty) = \mathrm{lim}_{z \rightarrow \infty} |z^{2\Delta}|^2 V_{\Delta}(z)$ (choose $gz = -\frac{1}{z}$ and use equation (\ref{conformal_invariance})). From the axiom identifying $L_{-1} = \partial_z$, we can argue that $L_{-1}V_{\Delta}(\infty) = \mathrm{lim}_{z \rightarrow \infty} z^{2\Delta + 1}\bar{z}^{2\bar{\Delta}} L_{-1}V_{\Delta}(z)$, and generally,
\begin{equation}
    L \bar{L}V_{\Delta}(\infty) := \mathrm{lim}_{z \rightarrow \infty} |z^{2\Delta + |L|}|^2 L \bar{L} V_{\Delta}(z).
\end{equation}

\subsubsection{Local Ward identities}

If we choose $\epsilon = \frac{1}{(z-z_i)^{n-1}}, \quad n>1$ in equation (\ref{wardint}) we get a set of local Ward identities. For simplicity we concentrate only on the case where there is at most one descendant field in the N-point function,
\begin{equation} \label{localward}
    \langle L^{(z_i)}_{-n} V_{\sigma_i}(z_i) \prod_{j \neq i} V_{\Delta_j}(z_j)\rangle = \sum_{j \neq i} (-\frac{\partial_{z_i}}{z_{ji}^{n-1}} + \frac{(n-1)\Delta_j}{z_{ji}^n})\langle V_{\sigma_i}(z_i) \prod_{j \neq i} V_{\Delta_j}(z_j) \rangle,
\end{equation}

where we introduce the shorthand notation $z_{ij} := z_i - z_j$. In our setup it can be shown by induction that any N-point function with descendant fields, can be related to N-point functions with lower level descendant fields only by differential operators in $z$.

The global and local Ward identities of the right chiral symmetry $\bar{\mathfrak{V}}$ take a similiar form.

\subsubsection{Operator product expansion}

In the introduction we discussed how conformal field theories are especially interesting because they possess non-perturbative solutions. Hence the $VV$-OPE of primary fields is introduced to be able to solve the CFT exactly. By axiom the $VV$-OPE takes the form,
\begin{equation} \label{OPEform}
    V_{\sigma_1,\bar{\sigma_1}}(z_1) V_{\sigma_2,\bar{\sigma_2}}(z_2) \underset{z_1 \rightarrow z_2}{=} \sum_{\sigma_3,\bar{\sigma_3} \in \mathbb{S}} C^{12}_{3}(z_1,z_2) V_{\sigma_3,\bar{\sigma_3}}(z_2).
\end{equation}

The sum on the right side sums over all fields in the spectrum $\mathbb{S}$ of the CFT. In the case of Virasoro symmetry the $VV$-OPE has finite radius of convergence.
Applying $\oint_{z_1,z_2} \mathrm{d}z (z-z_2)^{n+1}T(z)$ on both sides of (\ref{OPEform}), yields the OPE Ward identities,
\begin{equation}
    \bigg (L_n^{(z_2)} + \sum_{m=-1}^n \binom{m+1}{n+1} z_{12}^{n-m} L_m^{(z_1)}\bigg )V_{\sigma_1}(z_1)V_{\sigma_2}(z_2) = \sum_{\sigma_3 \in \mathbb{S}} C^{12}_{3}(z_1,z_2) L^{(z_2)}_nV_{\sigma_3}(z_2).
\end{equation}

The solution of the OPE Ward identity reads,
\begin{equation} \label{vvope}
    V_{\Delta_1,\bar{\Delta_1}}(z_1) V_{\Delta_2,\bar{\Delta_2}}(z_2) = \sum_{\Delta_3,\bar{\Delta_3} \in \mathbb{S}} \frac{C_{123}}{B_3} \bigg |z_{12}^{\Delta_3^{12}} \sum_{L\in U(\mathfrak{B^-})} z_{12}^{|L|}f^L_{1,2,3}L \bigg |^2V_{\Delta_3,\bar{\Delta_3}}(z_2),
\end{equation}

where we use the notation $\Delta_I^J = \sum_{i \in I} \Delta_i - \sum_{j \in J}\Delta_j$ and the structure constant $B_2$ and $ C_{123}$ of the 2- and 3-point function (see equation (\ref{2p}) and (\ref{3p})). The coefficients $f^L_{1,2,3}$ are the solutions of the linear equations,
\begin{equation}
    \sum_{|L| = N-n} f^L_{1,2,3}(\Delta_3 + N - n + n\Delta_1 - \Delta_2)LV_{\Delta_3}(z) = \sum_{|L|=N}f^L_{1,2,3}L_nLV_{\Delta_3}(z).
\end{equation}

Up to the factor $\frac{C_{123}}{B_3}$ and the spectrum $\mathbb{S}$, the $VV$-OPE of primaries is known only by symmetry considerations. Therefore N-point functions of primaries can be reduced to (N-1)-point functions by the $VV$-OPE.

\subsection{Conformal blocks}

In general N-point functions can be written as a sum of spectrum dependent structure constants $C_J$ and universal $z$-dependent functions $\mathfrak{F}_J^{(N)}(z)$,
\begin{gather}
    \langle \Pi_i V_i(z_i)\rangle = \sum_J C_J \mathfrak{F}_J^{(N)}(z)\mathfrak{F}_J^{(N)}(\bar{z}).
\end{gather}

The functions $\mathfrak{F}^{(N)}$ are the conformal blocks of N-point functions (shorthand \textbf{blocks}). Conformal blocks are universal objects of CFTs, determined only by symmetry considerations and build the theory, as they determine up to the structure constants, all correlation functions.

\subsubsection{2- and 3-point blocks}

For the 2- and 3-point functions there is only one unknown structure constant and the blocks take a simple form. In both cases the correlator is determined by the global Ward identity (\ref{conformal_invariance}).
In the 2-point case the correlator and 2-point block is given by,
\begin{gather} \label{2p}
    \begin{aligned}
        \langle V_{\Delta_1,\bar{\Delta}_1}(z_1) V_{\Delta_2,\bar{\Delta}_2}(z_2)\rangle &= B_3 |\mathfrak{F}^{(2)}(\Delta_1,\Delta_2|z_1,z_2)|^2 \\
        \mathfrak{F}^{(2)} &= \delta_{\Delta_1,\Delta_2} z_{12}^{-2\Delta_1},
    \end{aligned}
\end{gather}

and the 3-point correlator and block by,
\begin{gather} \label{3p}
    \begin{aligned}
        \langle V_{\Delta_1,\bar{\Delta}_1}(z_1)V_{\Delta_2,\bar{\Delta}_2}(z_2)V_{\Delta_3,\bar{\Delta}_3}(z_3)\rangle &= C_{123} |\mathfrak{F}^{(3)}(\Delta_1,\Delta_2,\Delta_3|z_1,z_2,z_3)|^2 \\
        \mathfrak{F}^{(3)} &= z^{\Delta^{12}_3}_{12}z^{\Delta^{23}_1}_{23}z^{\Delta^{31}_2}_{31}. 
    \end{aligned}
\end{gather}

\subsubsection{4-point blocks}

The four point function,
\begin{equation}
    \langle V_{\Delta_1,\bar{\Delta}_1}(z_1)V_{\Delta_2,\bar{\Delta}_2}(z_2)V_{\Delta_3,\bar{\Delta}_3}(z_3)V_{\Delta_4,\bar{\Delta}_4}(z_4)\rangle
\end{equation}

seems to make more problems. First, the global Ward identities (\ref{conformal_invariance}) only determines three degree of freedom. Therefore we are obliged to use local Ward identities i.e. the $VV$-OPE (\ref{vvope}) to determine the undetermined degree of freedom from global Ward identities. Second, in the 4-point case there are 3 possibilities to apply the $VV$-OPE (field 1 with field 2, 3 or 4). Each possibility corresponds to one of the three channel s, t and u. Those channels are related to each other by a permutation of the conformal dimensions $\{\Delta_i\}_{i = 1,..,4}$ and complex coordinates $\{z_i\}_{i = 1,...,4}$. In the s-channel case, where the primaries $V_1$ and $V_2$ are OPE'd, the result reads,
\begin{gather} \label{4point}
    \mathfrak{F}_{\Delta_s}^{(4,s)}(\Delta_1,\Delta_2,\Delta_3,\Delta_4|z_1,z_2,z_3,z_4) = z_{13}^{-2\Delta_1}z_{23}^{\Delta^{14}_{23}}z_{34}^{\Delta^{12}_{34}} z_{24}^{\Delta^{3}_{124}}\bigg(x^{\Delta^{s}_{12}} \sum_{L \in L}x^{|L|}f_{\Delta_1,\Delta_2,\Delta_s}^Lg_{\Delta_s,\Delta_3,\Delta_4}^L\bigg),
\end{gather}

with coefficients $g^L_{\Delta_s,\Delta_3,\Delta_4} = \frac{\langle L V_{\Delta_s}(z_s) V_{\Delta_3}(z_3) V_{\Delta_4}(z_4)\rangle}{C_{s34}} = D_{z_s}(L) \mathfrak{F}^{(3)}(\Delta_s,\Delta_3,\Delta_4|z_s,z_3,z_4)$, where $D_{z_s}(L)$ is the differential operator in $z_s$ associated to the descendant $L \in \mathfrak{L}$.

\textit{Remark:} The term in the bracket of equation (\ref{4point}) correponds to the limit $(z_i)_{i=1,...,4} \rightarrow (x,0,\infty, 1)$ i.e. $\mathfrak{F}^{(4)}_{\Delta_s}(...|x,0,\infty,1)$.

For the sake of completeness, we show how the u-channel block is related to the s-channel block. In the u-channel we apply the OPE to the fields $V_1$ and $V_3$. The resulting u-channel block differs by a permutation from the s-channel block,
\begin{equation}
    \mathfrak{F}^{(4,u)}_{\Delta_u}(\Delta_1,\Delta_2,\Delta_3,\Delta_4|z_1,z_2,z_3,z_4) = \mathfrak{F}^{(4,s)}_{\Delta_u}(\Delta_1,\Delta_3,\Delta_2,\Delta_4|z_1,z_3,z_2,z_4).
\end{equation}

In the limit $(z_i)_{i=1,...,4} \rightarrow (x,0,\infty,1)$ and using the definition $\mathfrak{F}^{(4)}(\Delta_i|x) := \mathfrak{F}^{(4)}(\Delta_i|x,0,\infty,1)$, we get 
\begin{equation}
    \mathfrak{F}^{(4,u)}_{\Delta_u}(\Delta_{i}|x) = x^{-2\Delta_1}\mathfrak{F}^{(4,s)}_{\Delta_u}(\Delta_{i}|\frac{1}{x}),
\end{equation}

where we used the conformal transformation $x \mapsto \frac{0x + i}{ix + 0}$. A similiar statement can be worked out for the t- and s- channel block and the t- and u- channel block.

\subsubsection{Torus: 1-point block} \label{t1p}

If the underlying space is any Riemann surface, in our example the torus, we need to account for the extra structure to calculate the correlation functions.
A torus can be described by a 2d lattice on the complex plane. We identify the points $z \sim w$ if and only if $z = w + n w_1 + m w_2$, where $n,m$ are integers and $w_1, w_2$ linearly independent complex numbers, spanning the lattice. Global conformal invariance allows us to rescale the lattice vectors to be $w_1 = 1$ and $w_2 = \tau \in \mathbb{H}$ in the upper halfplane. Fields must respect the identification $\phi(z) = \phi(w)$ if $w \sim z$. The torus is then the quotient space $\mathbb{C}/(\mathbb{Z} + \tau \mathbb{Z})$ := $\mathbb{C}/\sim$.

In the path integral formalism it is easier to implement the identification $\phi(z) = \phi(w)$ if $z \sim w$. The path integral formalism needs a Lagrangian description, but the result can be generalized and only needs a well defined spectrum $\mathbb{S}$ to trace over,
\begin{equation} \label{torcorr}
    \langle G[\phi](x_i) \rangle \propto \int_{\phi(z) = \phi(w)} \mathrm{D}f G[\phi](x_i) e^{-S_E(\phi)} = \mathrm{tr}_{\mathbb{S}}(G[\phi](x_i)e^{2\pi i \tau (L_0 - \frac{c}{24})}e^{-2\pi i \bar{\tau} (\bar{L}_0 - \frac{c}{24})}).
\end{equation}

For $G := 1$, (\ref{torcorr}) is known as the torus partition function $q^{\Delta - \frac{c-1}{24}}\eta(q)^{-1}$, where $\eta(q) = q^{\frac{1}{24}}\prod_{i \in \mathbb{N}_{*}}(1-q^i)$ is the Dedekind eta function \cite{francesco}. The partition function is invariant under modular transformations $\mathrm{PSL}(\mathbb{Z})$ of $\tau$, which corresponds to the freedom of choosing a lattice base.

Evaluating the 1-point function $G[\phi_{\Delta,\bar{\Delta}}] = \phi_{\Delta,\bar{\Delta}}$ in equation (\ref{torcorr}), in terms of blocks yields ($q := e^{2 \pi i \tau}$),
\begin{equation} \label{tor1pt}
    \langle \phi_{\Delta_1,\bar{\Delta}_1}(0) \rangle = \sum_{\Delta, \bar{\Delta}} C_{\Delta_1,\bar{\Delta}_1,\Delta,\bar{\Delta}} |\mathfrak{T}^{(1)}_{\Delta}(\Delta_1|q)|^2,
\end{equation}

where we introduced the torus 1-point block $\mathfrak{T}^{(1)}$.

A $\Delta$-recursive presentation of the torus 1-point block was proposed by Fateev and Litvinov \cite{fateev},
\begin{gather} \label{torrec}
    \begin{aligned}
        \mathfrak{T}^{(1)}_{\Delta}(\Delta_1|q) &= \frac{q^{\Delta - \frac{c-1}{24}}}{\eta(q)}H_{\Delta}(\Delta_1|q) \\
        H_{\Delta} &= 1 + \sum_{m,n=1}^{\infty}\frac{q^{mn}R_{m,n}(\Delta_1,c)}{\Delta - \Delta_{m,n}}H_{\Delta_{m,-n}},
    \end{aligned}
\end{gather}

where $R_{m,n}$ is defined in equation (\ref{rmn}). This recursion, seen as a series in $q$, makes the poles $\Delta = \Delta_{m,n}$ manifest. The main disadvantage this recursion relation entails, are the additional singularities in the central charge $c$. They are considered unphysical in the sense, that they are expected to vanish after summation. In section (\ref{section:recursion}) the recursion relation is expressed in an explicit form, to make the $c$-singularities present and a method is proposed to calculate a $c$-singularity free expression at each order in $q$.

\subsection{Fusion algebra and fusion rules} \label{virasorofusion}

An important feature of the $VV$-OPE (\ref{vvope}) is the fusion algebra. It encodes the rules which representations (in our case Verma modules (\ref{virverma}) or degenerate representations (\ref{degrep})) occur in the OPE and is thus similiar to the decomposition of tensor products of representations. It is usually denoted as $R_1 \times R_2 = \oplus_i N^i_{12}R_i, N \in \mathbb{N}$ and shares the same property as the OPE, namely it is bilinear, associative and commutative. In this work we are more interested if a certain representation occurs in the fusion product and thus tend to not specify the value of $N^i_{12}$ if it is non-zero.
The most prominent example is the fusion of degenerate representations $R_{r,s}$ with Verma modules $\mathcal{V}_{\Delta}$. Consider the action $L_{r,s}$ on the degenerate field $L_{r,s}V_{\Delta_{<r,s>}} = 0$ within a 3-point function of diagonal fields, 
\begin{equation}
    \langle (L_{r,s} V_{\Delta_{<r,s>}}) V_{\Delta_2}(z_2) V_{\Delta_3}(z_3) \rangle = 0.
\end{equation}

It turns out that this condition constrains the conformal dimension $\Delta_3$ as follow,
\begin{equation} \label{virfusion}
    \Delta(P_3) = \Delta(P_2+ib+jb^{-1}),
\end{equation}

for $i \in I:= \{-\frac{r-1}{2},...,\frac{r-1}{2}\}$ and $j \in J:= \{-\frac{s-1}{2},...,\frac{s-1}{2}\}$. The condition (\ref{virfusion}) is equivalent to the fusion rule,
\begin{equation}
    R_{r,s} \times \mathcal{V}_{\Delta(P)} = \sum_{i\in I}\sum_{j \in J} \mathcal{V}_{\Delta(P+ib+jb^{-1})}.
\end{equation}

Fusion rules in a CFT with the larger symmerty algebra $\hat{\mathfrak{sl}}_2$ will be discussed and derived in chapter (\ref{fusionsl2}).

\section{1-point block on the torus}\label{section:recursion} 

We have seen in section (\ref{t1p}) a recursive formulation (\ref{torrec}) to calculate the 1-point block of a primary on the torus. As already pointed out there will arrise poles in the central charge $c$. In this section we show explicitly that the $c$-poles, of the first four orders in fact cancel and assume they cancel at each order i.e. they are a mathematical artefact of the recursive formulation. 
To start the discussion we introduce the factors $R_{m,n}$, useful reformulations and definitions. It is of importance to keep in mind the three variables we are going to work with: the central charge $c$ in terms of $b$ (\ref{I.7}), the external conformal dimension $\Delta_1$ and the internal conformal dimension $\Delta$. 

\subsection{Preliminaries}

The $R_{m,n}$-factors in equation (\ref{torrec}) are defined with help of the momentum $P_{<r,s>}$ (recall \ref{I.7}),
\begin{gather} \label{rmn}
    R_{m,n} := \frac{2P_{<0,0>} P_{<m,n>}}{\prod_{r = 1-m}^{m} \prod_{s=1-n}^{n} 2P_{<r,s>}} \prod_{r\underset{2}{=}1-2m}^{2m-1} \prod_{s \underset{2}{=}1-2n}^{2n-1} (P_1 + P_{<r,s>}),
\end{gather}

where the subscript 2 under the equation sign indicates, to increment the steps in the product by two instead of one. 

\textit{Note:} We have seen that $P_{<r,s>}$ under $b$-symmetry exchanges $r \leftrightarrow s$ and therefore $R_{r,s}$ has the same feature $R_{r,s}(b) = R_{s,r}(b^{-1})$.

For convenience we split $R_{m,n} = E_{m,n}F_{m,n}$ and write it explicitly in terms of $\beta := b^2$, $\Delta_1$ and $\Delta$,
\begin{gather} \label{rmnpoles}
    \begin{aligned}
        F_{m,n} & := \prod_{r =^2 1}^{2m-1}\prod_{s=^2 1}^{2n-1}(\Delta_{<r,s>} - \Delta_1)(\Delta_{<r,-s>} - \Delta_1) \\
        E_{m,n} & := \frac{m n}{2 m!^2 n!^2} \beta^{2m(n-1)}\prod_{r=1}^{m-1}  \frac{-n^{-2}}{1-\frac{r^2}{n^2}\beta^2} \prod_{s=1}^{n-1} \frac{s^{-2}}{1 - \frac{m^2}{s^2}\beta^2} \prod_{r=1}^{m-1} \prod_{s=1}^{n-1} (\frac{s^{-2}}{1 - \frac{r^2}{s^2}\beta^2})^2.
    \end{aligned}
\end{gather}

The reason of this reformulation is to be able to expand it easily around $\beta = 0$, using $\frac{1}{1-ax} = \sum_{i \geq 0} (ax)^i$. And it makes the poles of $E_{m,n}$ at $\beta \in \{ \pm \frac{1}{1,...,m},..., \pm \frac{n}{1,...,m-1}\}$ visible. Whereas $F_{mn}$ is regular in $\beta$ outside $0,\infty$.

We have already discussed the $b$-symmetry in connection with degenerate representations. $b$-symmetry is manifest with respect to observables because it leaves the central charge (\ref{I.7}) invariant. Thus we define the $b$-symmetric Laurent-polynomials,
\begin{gather} \label{BS}
    B_0 := 1, \quad B_j := \beta^j + \beta^{-j}, \quad j \in \mathbb{N}_{>0},
\end{gather}

because they will help us to formulate the results in a nice way.
Since the recursion sums up terms in the range $1\leq mn\leq N$, it is convenient to define the number $P_N := \sum_{mn \leq N}1$. It counts how many possibilities $(m,n) \in \mathbb{N}^2_{*}$ exist with product $mn \leq N$. As an example we give the first four cases,
\begin{equation}
    P_1 = 1, \quad P_2 = 3, \quad P_3 = 5, \quad P_4 = 8.
\end{equation}


Another common expression at fixed $(m,n) \in \mathbb{N}_{*}^2$, which is used to calculate a $c$-pole free recursion is,
\begin{equation} \label{hprefactor}
    \frac{1}{\prod_{rs \leq N - mn}(\Delta_{m,-n}-\Delta_{r,s})} =(4\beta)^{P_{N-mn}}\prod_{rs \leq N-mn} \frac{1}{s-n + (r+m)\beta} \prod_{rs \leq N-mn} \frac{1}{s+n + (r-m)\beta},
\end{equation}

where we rewrote it, to be able to expand it easily around $\beta = 0$. In this expansion the smallest power of the $\beta$-series appearing in (\ref{hprefactor}), is $P_{N-mn} - (\big \lfloor\frac{N}{n}\big \rfloor - m)$, where we counted the amount of times the condition $s = n$ within the left product on the right-hand-side in (\ref{hprefactor}) is fulfilled. For example in the case $m=1=n$ and $N=4$, we have the smallest $\beta$-power $P_{4-1} - (\big \lfloor\frac{4}{1}\big \rfloor - 1) = 5 - (4-1) = 2$.

\subsubsection{Recursion in powers of $q$}

We rewrite the recursion (\ref{torrec}) in a power series expansion of $q$,
\begin{equation} \label{order}
    H_{\Delta}(P_1|q) = 1 + \sum_{i = 1}^{\infty} q^i H^{(i)}_{\Delta}(P_1).
\end{equation}

At each order $i$ (\ref{order}) the $H^{(i)}_{\Delta}$ can be reexpressed as:
\begin{gather} \label{horder}
    \begin{aligned}
        H^{(1)}_{\Delta}(P_1) &= \frac{R_{1,1}}{\Delta}
         \\
        H^{(2)}_{\Delta}(P_1) &= H^{(1)}_{\Delta = 1} H^{(1)}_{\Delta} + \frac{R_{2,1}}{\Delta - \Delta_{<2,1>}} + \frac{R_{1,2}}{\Delta - \Delta_{<1,2>}} = \frac{R_{1,1}^2}{\Delta} + \frac{R_{2,1}}{\Delta - \Delta_{<2,1>}} + \frac{R_{1,2}}{\Delta - \Delta_{<1,2>}}
        \\
        H^{(3)}_{\Delta}(P_1) &= H^{(2)}_{\Delta = 1} H^{(1)}_{\Delta} + \frac{R_{2,1}R_{1,1}}{(\Delta - \Delta_{<2,1>})\Delta_{<2,-1>}} + \frac{R_{1,2}R_{1,1}}{(\Delta - \Delta_{<1,2>})\Delta_{<1,-2>}} + 
        \frac{R_{3,1}}{\Delta - \Delta_{<3,1>}} + \frac{R_{1,3}}{\Delta - \Delta_{<1,3>}}
        \\
        H^{(4)}_{\Delta}(P_1) &= \frac{R_{1,1}}{\Delta}H^{(3)}_{\Delta = 1} + \frac{R_{2,1}}{\Delta - \Delta_{<2,1>}}H^{(2)}_{\Delta = \Delta_{<2,-1>}} + \frac{R_{1,2}}{\Delta - \Delta_{<1,2>}}H^{(2)}_{\Delta = \Delta_{<1,-2>}} + \frac{R_{3,1}}{\Delta - \Delta_{<3,1>}}H^{(1)}_{\Delta = \Delta_{<3,-1>}} \\
        &+ \frac{R_{1,3}}{\Delta - \Delta_{<1,3>}}H^{(1)}_{\Delta = \Delta_{<1,-3>}} 
        + \frac{R_{4,1}}{\Delta - \Delta_{<4,1>}} + \frac{R_{2,2}}{\Delta - \Delta_{<2,2>}} + \frac{R_{1,4}}{\Delta - \Delta_{<1,4>}}
        \\
        &\vdots
        \\
        H^{(N)}_{\Delta}(P_1) &= \sum_{mn \leq N} \frac{R_{m,n}}{\Delta - \Delta_{m,n}} H^{(N-mn)}_{\Delta = \Delta_{<m,-n>}}(P_1),
    \end{aligned}
\end{gather}

where we define the zero'th order $H^{(0)}_{\Delta} := 1$.
Using equation (\ref{rmn}) we calculate $R_{1,1} = \frac{\Delta_1 (\Delta_1 - 1)}{2}$ and therefore see that there are no poles in the central charge at order 1.

\subsection{Singularity free at order 2: hands on calculation}

The poles in $c$ start to make trouble at order 2 and we go through the steps to get rid of it. Let's consider the expression of $H^{(2)}_{\Delta}$,
\begin{equation} \label{order2}
    H^{(2)}_{\Delta}(P_1) = \frac{R_{1,1}^2}{\Delta} + \frac{R_{2,1}}{\Delta - \Delta_{<2,1>}} + \frac{R_{1,2}}{\Delta - \Delta_{<1,2>}}.
\end{equation}

Altough $R_{1,1}$ is regular, $R_{2,1}(\beta)$ is not and we calculate using (\ref{rmnpoles}),
\begin{equation} \label{order2pole}
    E_{2,1}(\beta) = \frac{1}{4} \frac{1}{\beta^2 - 1}, \quad E_{1,2}(\beta) = \beta^2 E_{2,1}(\beta),
\end{equation}

and therefore the extra poles $\beta = \pm 1$ arise. To cancel the extra poles, we pair the second and third term in (\ref{order2}) and bring them to the same denominator,
\begin{equation} \label{order2pair}
    \frac{\beta E_{1,2}}{(\Delta - \Delta_{<1,2>})(\Delta - \Delta_{<2,1>})} \bigg(\beta^{-1} F_{1,2}(\Delta - \Delta_{<2,1>}) - \beta F_{2,1}(\Delta - \Delta_{<1,2>})\bigg).
\end{equation}

We observe that the factor in brackets in (\ref{order2pair}) changes sign under $b$-symmetry transformation. Laurent-polynomials with this property can be written as a sum of $\beta^n - \beta^{-n} = (\beta - \beta^{-1}) \sum_{i \underset{2}{=}-(n-1)}^{n-1} \beta^i$ and therefore the poles (\ref{order2pole}) cancel i.e. the poles are unphysical.

A more carful analysis and bringing all terms in (\ref{order2}) down to a common denominator, yields the pole free expression at order 2,
\begin{gather} \label{o2}
    H^{(2)}_{\Delta} = \frac{1}{32} \frac{R_{1,1}}{\Delta (\Delta - \Delta_{<1,2>})(\Delta - \Delta_{<2,1>})} \big(96 \Delta^2 + 4\Delta (-14 \Delta_1 + 2 \Delta_1^2 + 5 + c) + c \Delta_1(\Delta_1 - 1)\big).
\end{gather}

This procedure can be applied hypothetically at any order $N$, where we identify terms with the same poles, bring it down to a common denominator and finally cancel the pole. At higher order there arise more and more pole in each term because of the expression $E_{mn}$ in (\ref{rmnpoles}). Therefore it is getting out of control to cancel pole by pole. Nevertheless it has been done up to order 3 and 4, where the reader find the solutions in the appendix (\ref{appA}).

\subsection{Proposal: Singularity free at each order}

To start the discussion, we state first the following conjecture,

\begin{conjecture}
    At each order  $H^{(N)}_{\Delta}$ the poles in the central charge $c$ cancel in between terms and therefore a singularity free expression exists.
\end{conjecture}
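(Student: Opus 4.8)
The plan is to place $H^{(N)}_{\Delta}$ over the common denominator $\prod_{mn\le N}(\Delta-\Delta_{<m,n>})$ and to prove that the resulting numerator $P_N(\Delta,\Delta_1,\beta)$ is (i) invariant under the $b$-symmetry $\beta\mapsto\beta^{-1}$ and (ii) a Laurent polynomial in $\beta$, i.e.\ free of poles at finite $\beta\neq 0$. Granting both, $P_N$ is a polynomial in the symmetric combinations $B_j=\beta^j+\beta^{-j}$ of (\ref{BS}); since every $B_j$ is a polynomial in $B_1$ and $c=13+6B_1$ by (\ref{I.7}), $P_N$ is then a polynomial in $c$. Because the denominator $\prod_{mn\le N}(\Delta-\Delta_{<m,n>})$ is itself $b$-symmetric, and hence likewise polynomial in $c$, this realizes $H^{(N)}_{\Delta}$ as a rational function whose only singularities are the physical poles at $\Delta=\Delta_{<m,n>}$ — exactly the singularity-free form the conjecture asserts.

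Part (i) is the soft step, and since the denominator is already $b$-symmetric it reduces to the $b$-symmetry of $H^{(N)}_{\Delta}$ itself. The Note after (\ref{rmn}) gives $R_{m,n}(b)=R_{n,m}(b^{-1})$, and one has $\Delta_{<m,n>}(b)=\Delta_{<n,m>}(b^{-1})$, so $\beta\mapsto\beta^{-1}$ simply permutes the summands of (\ref{horder}) by $(m,n)\leftrightarrow(n,m)$. I would verify by induction on $N$ that the shifted blocks $H^{(N-mn)}_{\Delta=\Delta_{<m,-n>}}$ are exchanged compatibly, so that the whole recursion is $b$-symmetric term by term.

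Part (ii) is the real content. I would argue inductively that the candidate poles of $P_N$ in $\beta$ lie only at the rational loci where two Kac dimensions coincide, $\Delta_{<m,n>}(\beta_0)=\Delta_{<m',n'>}(\beta_0)$ with $mn,m'n'\le N$ — namely $\beta_0=\frac{n'-n}{m-m'}$ and $\beta_0=-\frac{n+n'}{m+m'}$, coming from $P_{<m,n>}=\pm P_{<m',n'>}$. These loci account simultaneously for the poles of the $E_{m,n}$-factors displayed in (\ref{rmnpoles}) and for the poles produced by the shifted denominators (\ref{hprefactor}) when $\Delta=\Delta_{<m,-n>}$ is inserted into a lower block. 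At each $\beta_0$ I would collect the colliding summands and show their combined residue vanishes, using the explicit residues of the $E_{m,n}$ read off from (\ref{rmnpoles}), the inductive regularity of the lower-order blocks, and the relations among the regular factors $F_{m,n}$ and the shifted dimensions that hold on the collision locus. The order-2 calculation (\ref{order2pole})--(\ref{o2}) is the prototype: the two level-two summands share the poles $\beta=\pm1$, their residues there are equal and opposite, and the paired numerator — being $b$-antisymmetric — is divisible by $\beta-\beta^{-1}$ and so vanishes at $\beta=\pm1$.

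The main obstacle is the combinatorics of these collisions at large $N$. A single $\beta_0$ (e.g.\ $\beta_0=\pm1$) is a simultaneous coincidence point of all labels sharing a fixed value of $P_{<m,n>}$, so the colliding group can be large, its members can sit at different levels $mn$ (hence multiply \emph{different} lower blocks), and the poles need not be simple; proving the residue identity uniformly in $(m,n)$ — the higher analogue of the order-2 cancellation, with the matching of the $F$-factors and the shifted dimensions — is where the work concentrates. A cleaner but less self-contained route, which I would keep as a cross-check, is to use the trace that defines the block: expanding $\mathrm{Tr}_{\mathcal{V}_{\Delta}}\!\big(V_{\Delta_1}\,q^{L_0-\frac{c}{24}}\big)$ in a basis of level-$N$ descendants gives $H^{(N)}_{\Delta}=\mathrm{poly}(c,\Delta,\Delta_1)/\det G_N$, and the Kac determinant $\det G_N$ is manifestly a polynomial in $c$ and $\Delta$, so the block cannot have $c$-poles away from the Kac locus — provided one accepts that (\ref{torrec}) reproduces this block.
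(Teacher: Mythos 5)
First, note that the paper does not prove this statement: it is posed and used as a conjecture, supported only by the explicit order-$2$ cancellation (\ref{order2pair})--(\ref{o2}), the order-$3$ and order-$4$ expressions in appendix \ref{appA}, and the structural evidence around (\ref{grandconjecture}). So there is no paper proof to match; the question is whether your proposal closes the gap the paper leaves open, and your primary route does not. Step (i), the $b$-symmetry of the common numerator, is fine and is essentially the paper's own observation about pairing $(m,n)$ with $(n,m)$. Step (ii) is the entire content, and you leave it exactly where the paper leaves it: you correctly locate the candidate poles at the rational $\beta_0$ where Kac labels collide, and you correctly observe that the colliding group can be large, can mix summands at different levels $mn$ (hence multiplying \emph{different} lower blocks $H^{(N-mn)}$ at different shifted dimensions $\Delta_{<m,-n>}$, with the additional poles from (\ref{hprefactor})), and that the poles from the $E_{m,n}$ of (\ref{rmnpoles}) need not be simple --- and then you state that the uniform residue cancellation ``is where the work concentrates'' without performing it. The order-$2$ prototype does not generalize mechanically: there the cancellation reduces to the $b$-antisymmetry of a single two-term bracket, whereas at, say, $\beta_0=\pm 1$ and general $N$ one must control a whole coincidence class with inductively defined lower blocks entering. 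As written, the main argument is a restatement of the conjecture plus a plan of attack.

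Your ``cross-check'' via the trace is in fact the more substantive part and is genuinely different from anything in the paper: expanding $\mathrm{Tr}_{\mathcal{V}_{\Delta}}\big(V_{\Delta_1}\,q^{L_0-c/24}\big)$ over level-$N$ descendants exhibits $H^{(N)}_{\Delta}$ with denominator the Kac/Gram determinant, which is polynomial in $c$ and $\Delta$ and vanishes only on the locus $\Delta=\Delta_{<r,s>}$; hence the block itself has no $c$-singularities at generic $\Delta$, and the spurious poles of (\ref{torrec}) must cancel. This buys an actual (and short) proof, but only modulo the identification of the recursion (\ref{torrec}) with that trace, which is the content of \cite{fateev} and is not established in the paper either; the caveat you flag is therefore the whole remaining burden. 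If you want a complete argument, promote this route to the main proof and make the equivalence precise (or cite it precisely), and demote the residue-cancellation route to what it is in the paper: a mechanism verified explicitly at low orders.
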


Second we define the functions $K_N(\Delta,\Delta_1,\beta)$ in the sense of the recursion (\ref{torrec}),
\begin{gather}
    \begin{aligned} 
        H^{(N)}_{\Delta} &= \frac{R_{11}}{\prod_{kl \leq N}(\Delta - \Delta_{<k,l>})} K_N \\ \label{zamoanddaro}
        K_N &:= \sum_{mn \leq N} \bigg(\frac{E_{mn}F_{mn}}{R_{11}} \prod_{\underset{k,l \neq m,n}{kl\leq N}}(\Delta - \Delta_{<k,l>}) H^{(N-mn)}_{\Delta_{<m,-n>}}\bigg).
    \end{aligned}
\end{gather}

Investigating the algebraic form of the pole free expression of the first four orders (\ref{o2} and appendix \ref{appA}) leads us to conjecture the form of the pole free expression,

\begin{conjecture}
    $K_N$ is a $b$-symmetric Laurent-polynomial in $\beta$ of degree $Q_N := P_N - N$, with coefficients polynomial in $\Delta, \Delta_1$. Concretely the algebraic form of $K_N$ is given by (recall the definition of $B_s$ (\ref{BS})),
    \begin{gather} \label{grandconjecture}
        \boxed{\begin{aligned}
            K_N(\Delta,\Delta_1,\beta) = & \biggl\{\sum_{i=0}^{P_N - 2} \Delta^{P_N - 1-i } \biggl(\sum_{s=0}^{min(Q_N,i)} \sum_{j = 0}^{min(2N-2,2i-2s)} (-1)^j C^N_{ijs} \Delta_1^{j} B_s\biggr) \biggr\}\\
            &+ \Delta_1(\Delta_1 - 1)\sum_{s = 0}^{Q_N} \sum_{j=0}^{2N-4} (-1)^j C^N_{(P_N - 1)js}\Delta_1^j B_s,
        \end{aligned}}
    \end{gather}

    with positive rational coefficients $C^N_{ijs} \in \mathbb{Q}_+$.
\end{conjecture}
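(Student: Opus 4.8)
The plan is to prove the statement by strong induction on $N$, with $K_1=1$ and the computed cases $N\le 4$ as base, feeding the explicit form (\ref{grandconjecture}) for all $M<N$ into the definition (\ref{zamoanddaro}). Inserting the inductive form $H^{(M)}_\Delta=R_{11}K_M\big/\prod_{kl\le M}(\Delta-\Delta_{<k,l>})$ (with $H^{(0)}=1$) into (\ref{zamoanddaro}) and cancelling the common $R_{11}$ gives, for every $mn<N$, the explicit term
\begin{equation}
E_{mn}F_{mn}\Big(\prod_{\substack{kl\le N\\(k,l)\ne(m,n)}}(\Delta-\Delta_{<k,l>})\Big)\frac{K_{N-mn}(\Delta_{<m,-n>},\Delta_1,\beta)}{\prod_{kl\le N-mn}(\Delta_{<m,-n>}-\Delta_{<k,l>})},
\end{equation}
while the $mn=N$ term is $R_{11}^{-1}E_{mn}F_{mn}\prod_{(k,l)\ne(m,n)}(\Delta-\Delta_{<k,l>})$, which is still polynomial because $R_{11}\mid F_{m,n}$. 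Summing these over $mn\le N$ is the working identity; from here the claims split into an accessible block and a hard block.

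Several features are immediate. Only the displayed product carries $\Delta$, and since $H^{(N)}_\Delta=O(\Delta^{-1})$ at large $\Delta$ while $\prod_{kl\le N}(\Delta-\Delta_{<k,l>})$ is monic of degree $P_N$, $K_N$ is a polynomial in $\Delta$ of degree at most $P_N-1$. The map $\beta\mapsto\beta^{-1}$ sends $R_{mn}\mapsto R_{nm}$ and $\Delta_{<m,\pm n>}\mapsto\Delta_{<n,\pm m>}$ and permutes the symmetric index set $\{kl\le N\}$, so the recursion (\ref{torrec}) is invariant and $H^{(N)}_\Delta$, hence $K_N$, is $b$-symmetric. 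Setting $\Delta=0$ annihilates every term except $(m,n)=(1,1)$, since $\Delta_{<1,1>}=0$ forces a vanishing factor in $\prod_{(k,l)\ne(m,n)}(\Delta-\Delta_{<k,l>})$ at $\Delta=0$ unless $(m,n)=(1,1)$; the surviving term carries $E_{11}F_{11}=R_{11}=\tfrac12\Delta_1(\Delta_1-1)$, which is exactly the $\Delta_1(\Delta_1-1)$ prefactor of the $\Delta^0$ piece. Finally $\deg_{\Delta_1}F_{mn}=2mn$ with $F_{mn}$ divisible by its $(1,1)$-factor, so combining with $\deg_{\Delta_1}K_{N-mn}=2(N-mn)-2$ from the inductive hypothesis bounds $\deg_{\Delta_1}K_N\le 2N-2$.

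The hard block is the absence of $\beta$-poles (Conjecture 1, built into ``Laurent-polynomial''), the sharp $\beta$-degree $Q_N$, and the joint caps $s\le\min(Q_N,i)$ and $j\le\min(2N-2,2i-2s)$. For pole-freeness I would avoid cancelling the spurious poles of $E_{mn}$ (\ref{rmnpoles}) and of the prefactor (\ref{hprefactor}) term by term, and argue intrinsically: as the order-$q^N$ coefficient of a trace over $\mathcal V_\Delta$, $H^{(N)}_\Delta$ is a ratio of a polynomial in $(\Delta,\Delta_1,c)$ over a Kac determinant, so at fixed generic $(\Delta,\Delta_1)$ its only poles in $\beta$ lie on $\Delta=\Delta_{m,n}(\beta)$, which generic $\Delta$ avoids; the recursion makes these poles simple in $\Delta$, so $\prod_{kl\le N}(\Delta-\Delta_{<k,l>})H^{(N)}_\Delta$ is a polynomial in $\Delta$ free of $\beta$-poles, and dividing by the $\beta$-independent $R_{11}$ yields the Laurent-polynomial $K_N$. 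The degree $Q_N$ and the joint caps — including the collapse of the $\Delta^{P_N-1}$ coefficient to the pure constant $C^N_{000}$, which by degree-counting alone is not obvious because every individual term has $\Delta_1$-degree $2N-2$ — I would extract from degree estimates on this numerator, tracking the correlation between the $L_0$-grading (the power of $\Delta$) and the $c$- and $\Delta_1$-degrees of the level-$N$ matrix elements of $V_{\Delta_1}$, with $b$-symmetry upgrading the resulting $\beta$-degree bound to the stated equality.

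The genuine obstacle is positivity, $C^N_{ijs}\in\mathbb{Q}_+$. It is the one property not stable under the signed sum defining $K_N$: the factors $E_{mn}$ change sign across their poles and the recursion superposes contributions of both signs, so no degree- or cancellation-bookkeeping will keep the coefficients in $\mathbb{Q}_+$. Establishing $C^N_{ijs}>0$ would, I expect, require an independent manifestly-nonnegative presentation of $K_N$ — a combinatorial sum over states or lattice data with nonnegative weights, or a representation-theoretic reading of the $C^N_{ijs}$ — precisely the structure the recursion does not supply. I would therefore treat the degrees, $b$-symmetry and pole cancellation as reachable along the lines above, and isolate positivity as the central open point, to be attacked by constructing such a positive model.
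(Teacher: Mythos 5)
You should first be aware that the paper does not prove this statement: it is presented explicitly as a conjecture, supported only by the explicit computations of $K_2,K_3,K_4$ in (\ref{o2}) and Appendix \ref{appA}, by the large-$c$ (light-block) limit which yields the upper bound $\tilde Q_N\le Q_N$ on the Laurent degree, and by the consistency checks on the $\Delta^0$ and $\Delta^{P_N-1}$ coefficients in (\ref{I.13})--(\ref{cons2}). Measured against that, your proposal is a reasonable partial result rather than a proof, and you are commendably honest about which parts you actually establish. The accessible block is correct and in places more rigorous than the paper: the $O(\Delta^{-1})$ decay of $H^{(N)}_\Delta$ giving $\deg_\Delta K_N\le P_N-1$, the $b$-symmetry via the pairing $(m,n)\leftrightarrow(n,m)$, the bound $\deg_{\Delta_1}K_N\le 2N-2$ from $\deg_{\Delta_1}F_{mn}=2mn$ and $R_{11}\mid F_{mn}$, and the $\Delta=0$ analysis, which reproduces (\ref{I.13}). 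One bookkeeping slip there: the factor $E_{11}F_{11}=R_{11}$ cancels against the explicit $1/R_{11}$ in (\ref{zamoanddaro}); the $\Delta_1(\Delta_1-1)$ prefactor of the $\Delta^0$ piece actually comes from the $R_{11}$ inside $H^{(N-1)}_{\Delta=1}=R_{11}K_{N-1}(\Delta=1)/\prod_{rs\le N-1}(1-\Delta_{<r,s>})$, which is why the argument needs $N\ge2$ (and indeed (\ref{grandconjecture}) degenerates at $N=1$).

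Your intrinsic route to pole-freeness — realizing $H^{(N)}_\Delta$ as a level-$N$ trace whose only denominator is the Kac determinant, so that at fixed generic $\Delta$ there are no extraneous $\beta$-poles — is a genuinely different and attractive idea that the paper does not have (it leaves cancellation as Conjecture~1, checked order by order). To make it airtight you must pass from ``simple poles in $\Delta$ for generic $\beta$'' to divisibility of the numerator by $\prod_{rs\le N}(\Delta-\Delta_{<r,s>})^{p(N-rs)-1}$ in the ring $\mathbb{C}[\beta^{\pm1},\Delta_1][\Delta]$; since that product is monic in $\Delta$, a Gauss-lemma-type argument closes this, but it should be said, and collisions $\Delta_{<r,s>}=\Delta_{<r',s'>}$ at special $\beta$ are exactly where it could fail if stated only generically. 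The genuinely missing pieces relative to the conjecture are the ones you name: the sharp Laurent degree $Q_N$ (the paper at least supplies the upper bound via (\ref{climit})--(\ref{betalimit}) and argues saturation from (\ref{I.13}), which your ``degree estimates'' do not yet reproduce), the joint caps $s\le\min(Q_N,i)$ and $j\le\min(2N-2,2i-2s)$, and positivity of the $C^N_{ijs}$. Your diagnosis that positivity cannot follow from the signed recursion and would require an independent manifestly nonnegative presentation is exactly right and is the honest core of the situation; the paper offers nothing beyond $N\le4$ on this point either.
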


Note that in the first line of (\ref{grandconjecture}) the power of $\Delta$ is at least one. Whereas the second line, not included in the sum over $i$, has no $\Delta$-dependence and the power of $\Delta_1$ is at least one.

\subsubsection{Large $c$ limit}

In the large central charge limit of the torus 1-point block, an upper bound on the degree of the Laurent-polynomial $K_N$ can be calculated. Let's work for now with a hypothetical degree $\tilde{Q}_N$ of the Laurent-polynom $K_N$. The large $c$ limit of the block exists and is represented as \cite{largec},
\begin{equation} \label{climit}
    q^{\frac{c}{24} - \Delta} \mathfrak{T}^{(1)}_{\Delta}(\Delta_1|q) = \mathcal{L}_{\Delta}(\Delta_1|q) + \mathcal{O}(c^{-1}),
\end{equation}

where $\mathcal{L}_{\Delta}$ is known as the light torus block.

We observe that in the $\beta \rightarrow 0$ limit, the central charge behaves as $c = 6\beta^{-1} \rightarrow \infty$. And conclude that in the recursive representation (\ref{torrec}), along with the assumption that $K_N$ is a Laurent-polynom of degree $\tilde{Q}_N$, the limit $\beta \rightarrow 0$ exists and is given by,
\begin{gather} \label{betalimit}
    q^{\frac{c}{24} - \Delta} \mathfrak{T}^{(1)}_{\Delta}(\Delta_1|q) \propto \sum_{N \geq 0} \frac{q^N}{\prod_{rs\leq N}(\Delta - \Delta_{r,s})} K_N(\beta) \underset{\beta \rightarrow 0}{\sim} \sum_{N \geq 0} q^N \mathcal{O}(\beta^{Q_N-\tilde{Q}_N}),
\end{gather}

where the dominant behaviour of $\prod_{rs\leq N} (\Delta - \Delta_{r,s})^{-1}$ is given by $\beta^{Q_N}(1 + \mathcal{O}(\beta))$. Therefore we conclude that the limit (\ref{betalimit}) exists only if $\tilde{Q}_N \leq Q_N$.

\subsubsection{Tracking $\Delta$ in (\ref{zamoanddaro})}

By multiplying out the product $\prod_{\underset{k,l \neq m,n}{kl\leq N}}(\Delta - \Delta_{k,l})$ in (\ref{zamoanddaro}), we inspect the prefactors of the $\Delta^i$-monomials for $i \in \{1,...,P_N - 1\}$ in $K_N$ and compare it to the conjecture (\ref{grandconjecture}). Those prefactors are manifest $b$-symmetry invariant \footnote[1]{The $b$-symmetry invariance can be seen if we pair $(m,n)$- with $(n,m)$-terms in (\ref{zamoanddaro})}. Due to the expected, very restricted form of $K_N$, there will arise many non-trivial constraints.
\newline

\textbf{($\Delta^0$):} The coefficient in front of $\Delta^0$ takes a very simple form,
\begin{equation} \label{I.13}
    K_N(\Delta = 0,\Delta_1) = \Delta^0 \times \biggl\{(-1)^{P_N - 1} \frac{\big(\prod_{\underset{k,l \neq 1,1}{kl\leq N}}\Delta_{<k,l>}\big) \cdot R_{11}}{\prod_{rs \leq N - 1}\big(1 - \Delta_{<r,s>}\big)}K_{N-1}(\Delta = 1,\Delta_1)\biggr\}.
\end{equation}

Thus we can check if the power $Q_N$ is consistent and indeed given by the upper bound. We apply the conjecture on $K_{N-1}$, i.e. it is a Laurent-polynomial of degree $Q_{N-1}$ and expand everything else in (\ref{I.13}) around $\beta = 0$ (see \ref{hprefactor}). In the resulting product of $\beta$-series, we observe that the smallest power of $\beta$ appearing is $-Q_N$ and therefore get the expected power of $Q_N$.

In fact, the factor $2 R_{11} = \Delta_1 (\Delta_1 - 1)$ in (\ref{I.13}) implies the zeroes $K_N(\Delta = 0, \Delta_1 = 0, \beta) = 0 = K_N(\Delta = 0, \Delta_1 = 1 , \beta)$, because $K_N$ is regular in $\Delta_1$. Thats why we conclude that no $\Delta^0 \Delta_1^0 \beta^i$-terms appear as conjectured, but the factor $R_{11} \propto \Delta_1 (\Delta_1 - 1)$ appears in the second line of (\ref{grandconjecture}).
\newline

\textbf{($\Delta^{P_N-1}$):} Another interesting identity arises from the prefactor of $\Delta^{P_N - 1}$. Using the conjecture we infer that,
\begin{equation} \label{justanumber}
    \sum_{mn\leq N} \frac{R_{mn}}{R_{11}} H^{(N-mn)}_{\Delta_{<m,-n>}},
\end{equation}

is just a number, independent of $\beta$, $\Delta$ and $\Delta_1$ and thus equal to $C^N_{000}$. Assuming this number exists, we are able to determine a closed formula for it,
\begin{equation}
    C^N_{000} = \sum_{mn = N} n,
\end{equation}

where we took in (\ref{justanumber}) the limit $\Delta_1 \rightarrow 0$ and evaluated the term proportional to $\beta^0$. In the $\Delta_1$-limit only the $mn = N$ - terms contribute, because $R_{mn} \propto \Delta_1(\Delta_1 - 1)$ and $H^{(N-mn)} \propto \Delta_1(\Delta_1 - 1)$ are simply zero (for $\Delta_1 = 0$ the 1-point function (\ref{tor1pt}) is simply equal to the torus partition function and therefore $H_{\Delta}(\Delta_1 = 0, \beta) = 1$ (\ref{order})). And in the final step we determined the $\beta = 0$ - expansion of $E_{mn}F_{mn}/R_{11} \underset{\Delta_1 = 0}{=} n + \mathcal{O}(\beta)$, using the expressions of $E_{mn}$ and $F_{mn}$ (\ref{rmnpoles}). For example $C^N_{000}$ for the first 6 orders is given by,
\begin{equation}
    C^1_{000} = 1, \quad C^2_{000} = 3, \quad C^3_{000} = 4, \quad C^4_{000} = 7, \quad C^5_{000} = 6, \quad C^6_{000} = 12,
\end{equation}

which coincides with the results of the pole free expressions at order 1, 2, 3 and 4 (\ref{o2}), (appendix \ref{appA}).

Tracking in (\ref{justanumber}) the smallest $\beta$-power in the $\beta = 0$ expansion (which occurs at $(m,n) = (1,2)$ \footnote[2]{The number $\lfloor\frac{N}{n}\rfloor - m$ is maximum for positive integers $(1,1) \neq (m,n) = (1,2)$.}) leads to the constraint,
\begin{equation} \label{cons2}
    0 = \big(\beta^{\lfloor N/2 \rfloor - 2} H^{N-2}_{\Delta_{1,-2}}\big)\big|_{\beta = 0}
\end{equation}

if $2 - \lfloor N/2 \rfloor <0$.
For example at order $N = 6$, we might have naively expected a non-zero contribution at $\beta^{2-\lfloor 6/2 \rfloor} = \beta^{-1}$ but it has been numerically checked that (\ref{cons2}) is indeed fullfilled for $N = 6$.

\subsubsection{Calculation of pole free expression}

We have conjectured the existence of a pole free expression and deduced the algebraic form of the $K_N$'s (\ref{grandconjecture}). Here we propose a way to compute the pole free expression from the definition of $K_N$ (\ref{zamoanddaro}),

\begin{corollary}
    The pole free expression in conjecture (\ref{grandconjecture}) can be calculated by expanding $K_N$ (\ref{zamoanddaro}) around $\beta = 0$,
    \begin{equation} \label{howto2}
        \underset{\beta \rightarrow 0}{\mathrm{lim}} K_N = K_N(\Delta,\Delta_1)_
        {Q_N}\beta^{-Q_N} + K_N(\Delta,\Delta_1)_{1-Q_N} \beta^{-Q_N + 1} + ... + K_N(\Delta,\Delta_1)_{0} \beta^{0} + \mathcal{O}(\beta^{1}),
    \end{equation}
    where $K_N(\Delta,\Delta_1,\beta) = \sum_{j=0}^{Q_N} K_N(\Delta,\Delta_1)_j B_j$ is the pole free expression.
\end{corollary}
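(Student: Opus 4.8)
The plan is to read the corollary as an extraction principle rather than as an independent analytic statement. Granting the grandconjecture, the object $K_N$ defined in (\ref{zamoanddaro}) is literally a $b$-symmetric Laurent polynomial in $\beta$ of degree $Q_N$, and any such polynomial is uniquely reconstructed from its principal part together with its constant term. So the proof splits into two pieces: an elementary remark about the basis $\{B_j\}$, and a justification that the $\beta\to 0$ expansion of the defining expression terminates and is bona fide Laurent.

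First I would set $c_j := K_N(\Delta,\Delta_1)_j$ and write, using the conjecture, $K_N = \sum_{j=0}^{Q_N} c_j B_j = c_0 + \sum_{j=1}^{Q_N} c_j(\beta^j + \beta^{-j})$. The key remark is that in this finite sum the monomial $\beta^{-j}$ with $1\le j\le Q_N$ is supplied only by $B_j$, since every $B_k$ with $k\neq j$ has lowest power $\beta^{-k}\neq\beta^{-j}$, while $\beta^0$ is supplied only by $B_0=1$. Hence the coefficient of $\beta^{-j}$ in the Laurent expansion equals $c_j$ for $j=1,\dots,Q_N$, and the coefficient of $\beta^0$ equals $c_0$. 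This sets up a bijection between the unknowns $\{c_j\}_{j=0}^{Q_N}$ and the coefficients of $\{\beta^{-Q_N},\dots,\beta^{-1},\beta^0\}$ in the expansion about the origin, which is exactly the content of (\ref{howto2}): the subscript $j$ labels the coefficient $c_j$ sitting at the power $\beta^{-j}$.

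Next I would justify that expanding the right-hand side of (\ref{zamoanddaro}) about $\beta=0$ really produces such a truncated tail. Each summand is meromorphic at the origin: the factor $E_{mn}$ of (\ref{rmnpoles}) is regular there, its spurious poles sitting at nonzero rational values of $\beta$, while the recursively inserted $H^{(N-mn)}_{\Delta_{<m,-n>}}$ contribute prefactors of the type (\ref{hprefactor}) carrying only finitely many negative powers of $\beta$. Thus the sum admits a genuine Laurent expansion at $\beta=0$. The bound $\tilde Q_N\le Q_N$ from the large-$c$ analysis (\ref{betalimit}), together with the conjectured exact degree $Q_N$, pins the lowest surviving power precisely at $\beta^{-Q_N}$: the more singular contributions of individual terms cancel in the sum. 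With this in place, reading off the coefficients of $\beta^{-Q_N},\dots,\beta^0$ yields $c_{Q_N},\dots,c_0$, and reassembling $K_N=\sum_{j=0}^{Q_N} c_j B_j$ completes the computation; the positive powers $\mathcal{O}(\beta^1)$ are then forced by $b$-symmetry and never need to be evaluated.

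The main obstacle is logical rather than computational: the recipe is entirely conditional on the grandconjecture, which supplies both the termination of the Laurent tail at $-Q_N$ and the $b$-symmetry that renders the principal part sufficient. Given those inputs, the corollary reduces to finite bookkeeping. Operationally, the one delicate point is the cancellation keeping the lowest power at $-Q_N$ rather than something more singular—this is the same cancellation of unphysical $c$-poles asserted in Conjecture 1—and checking it order by order (as verified through $N=4$, and numerically at $N=6$ via (\ref{cons2})) is the only nontrivial confirmation the method demands.
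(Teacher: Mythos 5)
Your proposal is correct and follows essentially the same route as the paper: conditional on the conjectured form $K_N=\sum_{j=0}^{Q_N}c_jB_j$, the coefficient of $\beta^{-j}$ in the Laurent expansion about $\beta=0$ is exactly $c_j$ (and $\beta^0$ gives $c_0$), so reading off the principal part plus constant term reconstructs $K_N$; the paper likewise justifies the expansion by substituting the $\beta$-series for $E_{mn}$ and the prefactor (\ref{hprefactor}) and collecting powers from $\beta^{-Q_N}$ to $\beta^0$. Your explicit remark that each $\beta^{-j}$ is supplied by a unique $B_j$ is only implicit in the paper, but it is the same bookkeeping.
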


Expanding (\ref{howto2}) around $\beta = 0$ seems hard to evaluate and thus we want to give a short description how to do it. Recall the definition of $K_N$,
\begin{equation} \label{recallkn}
    K_N = \sum_{mn \leq N} \bigg\{\frac{E_{mn}F_{mn}}{R_{11}} \prod_{k,l \neq m,n}(\Delta - \Delta_{<k,l>}) \bigg( \frac{R_{11}}{\prod_{kl \leq N-mn}(\Delta_{<m,-n>} - \Delta_{<k,l>})} K_{N-mn}\bigg)^{1 - \delta_{N,mn}}\bigg\},
\end{equation}

where we recursively replaced $H^{(N-mn)}_{\Delta_{<m,-n>}}$ by the expression (\ref{zamoanddaro}). The power $1 - \delta_{N,mn}$ needs to be added because $H^{(N-mn)}_{\Delta} = 1$, if $mn = N$. Equation (\ref{recallkn}) is enough to calculate the $\beta = 0$ expansion. We can simply replace the factors $E_{mn}$ by expression (\ref{rmnpoles}) and the factor $\prod_{rs\leq N-mn}(\Delta_{m,-n} - \Delta_{r,s})^{-1}$ by (\ref{hprefactor}). In this substitution (\ref{recallkn}) is completely written as a sum of products of $\beta$-series \footnote[3]{The series are either finite or infinite, but with a lower bound $\gamma > -\infty$ on the powers of $\beta$, $\sum_{i = \gamma}^{\infty} a_i \beta^i$.}. This is because we observe that any other factor, for example $K_{N-mn}$ or $F_{mn}$ (\ref{rmnpoles}), arising in $K_N$, is already a Laurent-polynom in $\beta$. The last step is to evaluate the product of all series and Laurent-polynomials in $\beta$ and to pick up the prefactors of $\beta^{-Q_N}$ up to $\beta^0$. 

\textit{Remark:} There is no need to treat $\Delta_1$ as a variable, since the recursion (\ref{horder}) does not need informations concering $\Delta_1$ and thus can be fixed. It is nevertheless useful to know the $\Delta_1$-dependence, otherwise the $K_N$'s for fixed $\Delta_1$ have to be calculated again for a different choice of $\Delta_1$.

\section{Affine symmetry} \label{section:affine}

In this section we introduce a CFT with an additional symmetry described by the affine Lie algebra $\hat{\mathfrak{g}}$. For a general Lie algebra $\mathfrak{g}$, we define holomorphic currents $J^a(z)$ implicitly through their OPE,
\begin{equation}
    J^a(y) J^b(z) = \frac{k K^{ab}}{(y-z)^2} + \frac{f^{ab}_c J^c(z)}{(y-z)} + O(1),
\end{equation}

where we introduced the level $k$, the structure constant $f^{ab}_c$ of the Lie algebra $\mathfrak{g}$ and the Killing form $K^{ab} := \frac{1}{2g} f^{ac}_d f^{bd}_c$ ($g$ is the dual coxeter number of the Lie algebra $\mathfrak{g}$, in the case of $\mathfrak{sl}_N$ we have $g = N$). By extracting the modes of the current $J^a(y)$,
\begin{equation}
    J^{a,(z)}_n := \oint_z dy (y-z)^n J^a(y),
\end{equation}

we get the generators $J^a_n$ of the affine Lie algebra $\hat{\mathfrak{g}}$, which obey the commutation relations,
\begin{equation} \label{affinecommutation}
    [J^a_n, J^b_m] = f^{ab}_c J^c_{n+m} + n k K^{ab} \delta_{n+m,0}.
\end{equation}

Here the level $k$ becomes the central element of the algebra \footnote[2]{For indecomposable representations of $\hat{\mathfrak{g}}$, the level $k$ acts as a number.}. Do not confuse the level $k$ with the level of descendence $N$. The level of descendence of an element $J^A_N := (J^{a_1}_{-n_1} \cdot ... \cdot J^{a_M}_{-n_M})_{a_i \in A} \in U(\hat{\mathfrak{g}})$, where $A$ is some discrete set, is defined to be the integer $N := \sum_{i=1}^M n_i$. 

\begin{definition}[Horizontal algebra]
    The subalgebra of $\hat{\mathfrak{g}}$ generated by the set $\{J^a_0\}_a$ (we check that the commutations relations (\ref{affinecommutation}) are invariant for $n=0$), is isomorph to the underlying Lie algebra $\mathfrak{g}$. Let's call the subalgebra in this context 'horizontal algebra'.
\end{definition}

Similiar to the Virasoro case a Virasoro field can be constructed to generate the conformal symmetry of the affine symmetric CFT,
\begin{equation} \label{emhat}
    T(z) := \frac{K_{ab}(J^a J^b)(z)}{2(k+g)},
\end{equation}

where we introduced the normal ordered product,
\begin{equation}
    (AB)(z) = \frac{1}{2\pi i}\oint_z \frac{\mathrm{d}z}{y-z}A(y)B(z).
\end{equation}

Equation (\ref{emhat}) is better known as the Sugawara construction and is consistent only if $k \neq -2$. The modes of $T(z)$ fulfill the Virasoro commutation relation (\ref{virasoro}) with central charge $c = \frac{k \mathrm{dim(\mathfrak{g})}}{k+g}$.

\subsection{Affine primary fields and Ward identities}

An affine primary field $\Phi^{R}(z)$ is defined by the OPE with the current $J^a(z)$,
\begin{equation} \label{affineope}
    J^a(y) \Phi^{R}(z) = \frac{-R(t^a)^T\Phi^{R}(z)}{(y-z)} + O(1),
\end{equation}

where $R$ is an arbritrary representation of the horizontal algebra. For convenience lets call $R$ the horizontal representation\footnote[1]{The reader is advised to read appendix (\ref{appB}) to get a short introduction to $\mathfrak{sl}_2$-irreducible representations, which are prominent examples of horizontal representations.}, where the state $|\Phi^{R}\rangle$ corresponding to $\Phi^{R}(z)$, is a state in the horizontal representation. And the action of $R$ (\ref{affineope}) is better understood if written in a basis of the horizontal representation, $\Phi_i \mapsto -R(t^a)^T \Phi_i = -R(t^a)_{ij} \Phi_j$. Note that we expect the OPE $J^aJ^b\Phi$ to be associative and therefore need the minus sign in (\ref{affineope}).
The primary field $\Phi^{R}(z)$ and the Virasoro field $T(z)$ (\ref{emhat}) satisfies the OPE (\ref{TVOPE}) and is thus a primary in the sense of the Virasoro algebra. Therefore correlators of affine primaries fulfill the Virasoro Ward identities and the results from section \ref{section:cft} can be used.

\subsubsection{Isospin variables}
Affine primary fields $\Phi^R$ transform linearly under the representation $R$ of the horizontal algebra $\mathfrak{g}$. Moreover the primaries can be represented as functions $\Phi^R_x$ depending on the isospin variable $x$. $R$ acts then on the isospin variable $x$ as differential operators $D^R_x(t^a) \Phi^R_x := R(t^a) \Phi^R_x$.
In the $\mathfrak{sl}_2$-case a common choice of basis, the $x$-basis, is given by the differential operators,
\begin{gather} \label{xbasis}
    D^j_x(t^-) = -\partial_x, \quad 
    D^j_x(t^0) = x\partial_x - j \quad 
    D^j_x(t^+) = x^2 \partial_x - 2jx. 
\end{gather}

In the initial definition (\ref{affineope}) the action is defined by its transposed, such that in the isospin formalism the consecutive action acts as $J^a_0 J^b_0 \Phi^{R}_x(z) = D^R_x(t^b)D^R_x(t^a)\Phi^{R}_x(z)$.

In the same way, as in the Virasoro case, we can derive affine Ward identities to constrain the N-point functions.

\subsubsection{Global Ward identity}

The global Ward identities for a given basis $\{t^a\}_a$ of the horizontal algebra read,
\begin{equation} \label{sl2global}
    0 = \sum_{i=1}^N D^{R_i}_{X_i}(t^a)\langle \prod^N_{i=1} \Phi^{R_i}_{X_i}(z_i)\rangle.
\end{equation}

In the $x$-basis and the case of $\mathfrak{sl}_2$, the solution of the global Ward identities (\ref{sl2global}) for the three point function is given by,
\begin{equation} \label{affine3p}
    \langle \Phi^{j_1}_{x_1}(z_1)\Phi^{j_2}_{x_2}(z_2)\Phi^{j_3}_{x_3}(z_3) \rangle \propto x^{j^3_{12}}_{12}x^{j^1_{23}}_{23}x^{j^2_{31}}_{31},
\end{equation}
where we again used the short hand notation $j_I^J = \sum_{i \in I} j_i - \sum_{j \in J}j_j$ and $x_{ij} := x_i - x_j$.

\subsubsection{Local Ward identity}

If all except one field in the N-point functions are affine primaries, a useful local Ward identity reads,
\begin{equation} \label{sl2local}
    \langle J^a_{n < 0} \Phi^{\sigma_i}(z_i)\prod_{j\neq i} \Phi^{R_j}_{X_j}(z_j) \rangle = \sum_{j\neq i} \frac{D^{R_j}_{X_j}(t^a)}{z_{ji}^n}\langle \Phi^{\sigma_i}(z_i)\prod_{j\neq i} \Phi^{R_j}_{X_j}(z_j)\rangle.
\end{equation}

Both, global and local Ward identities, along with the three point function (\ref{affine3p}), will be used in section (\ref{fusionsl2}) to calculate fusion rules between affine primary fields and degenerate fields.

\subsection{$\hat{\mathfrak{sl}}_2$ degenerate representations}

In this section we proceed to discuss degenerate representations and determine null vectors in the case of general $\mathfrak{sl}_2$ horizontal representations. Remember a degenerate representation is constructed by modding out the sub-representations, generated by the null vectors. We work with the variable $t := k + 2$.

As for our purpose ($\mathfrak{sl_2}$) we use the $(0,\pm)$-basis with Lie bracket,
\begin{gather}
    [t^0,t^{\pm}] = \pm t^{\pm}, \quad [t^+, t^-] = 2t^0.
\end{gather}

The corresponding symmetric Killing form is given by $K_{00} = 2, K_{+-} = 1$, with inverse $K^{00} = \frac{1}{2}, K^{+-} = 1$ and is used to raise and lower indices. Specifically in the $\mathfrak{sl}_2$ case, the structure constant fulfills the non-trivial identity $f^{ab}_i f^{ic}_d = 2(K^a_d K^{bc} - K^{ac}K^b_d)$. Furthermore the quadratic Casimir element $C_2 = K_{ab}J^a_0J^b_0$ in the $(0,\pm)$-basis is given by $C_2 = 2((J^0_0)^2 + J^+_0 J^-_0 - J^0_0)$.

\subsubsection{Affine highest weight representation}

Let's take an affine Lie algebra $\hat{\mathfrak{g}}$ and a representation $R(t^a)$ of the horizontal algebra $t^a \in \mathfrak{g} \subset \hat{\mathfrak{g}}$. The highest weight representation $\hat{R}$, equivalent to the OPE (\ref{affineope})\footnote[1]{The equivalence can be worked out by determining the action of $J^a_n$ on the primary field $\Phi^{R}$, which is indeed a highest weight representation.}, is constructed by generalizing the action from $\mathfrak{g}$ to $\hat{\mathfrak{g}}$ on any state $|v\rangle \in R$,
\begin{gather} \label{highestweightrep}
    J^a_{n>0}|v\rangle = 0, \quad J^a_0 |v\rangle = -(t_a)^T |v\rangle.
\end{gather}

The resulting, generalized action is a highest weight representation of $\hat{\mathfrak{sl}}_2$. Compared to the Virasoro case (\ref{virverma}), the horizontal subspace is in general not one dimensional and spanned by a basis of the horizontal vector space. 

Null vectors in highest weight representations (\ref{highestweightrep}) of affine Lie algebras are defined generally,

\begin{definition}[Affine null vector]
    A null vector at level $N$ $|\chi,N\rangle = \sum_{i,A} c_A^i J^A_{N} |v_i\rangle$ in a highest weight representation (\ref{highestweightrep}) fulfills the conditions $J^a_{n>0} |\chi,N\rangle = 0$ for any $a$.
\end{definition}

It is enough to check the null vector condition only for $J^a_{1}$ because of the commutation $[J^a_1,J^b_1] = f^{ab}_c J_2^c$ and by induction on the level.

The connection to irreducibility is given by the following theorem \cite{bauer},

\begin{theorem}
    The highest weight representation (\ref{highestweightrep}) is irreducible if and only if it contains no null vectors.
\end{theorem}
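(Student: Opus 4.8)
The plan is to prove both directions of the equivalence by relating the existence of null vectors to the degeneracy of an appropriate bilinear form, in direct analogy with the Virasoro Theorem 1 quoted earlier in the excerpt. First I would establish the easy direction: if the highest weight representation $\hat{R}$ contains a null vector $|\chi,N\rangle$ at some level $N>0$, then by the Affine null vector definition it satisfies $J^a_{n>0}|\chi,N\rangle = 0$, so the subspace $U(\hat{\mathfrak{g}})|\chi,N\rangle$ is invariant under the full action of $\hat{\mathfrak{g}}$. Since $|\chi,N\rangle$ lives at positive level of descendence $N$, it cannot lie in the horizontal subspace and hence generates a proper, nonzero invariant subspace; therefore $\hat{R}$ is reducible. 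Contrapositively, irreducibility forces the absence of null vectors.

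The harder direction is the converse: assuming no null vectors exist, show $\hat{R}$ is irreducible. Here I would use the grading of $\hat{R}$ by level of descendence $N$, with each graded piece finite-dimensional (it is generated by the finitely many monomials $J^A_N$ acting on the finite-dimensional horizontal space $R$). Suppose for contradiction that $\hat{R}$ contains a proper nonzero invariant subspace $W$. The key step is to argue that $W$ must then contain a nonzero vector annihilated by all $J^a_{n>0}$, i.e.\ a null vector, contradicting the hypothesis. To produce such a vector I would take any nonzero $w \in W$ and lower its level of descendence: repeatedly apply raising operators $J^a_{+1}$, which strictly decrease $N$ by the grading, until reaching a vector at minimal level still inside $W$ that is annihilated by every $J^a_{1}$. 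By the remark in the excerpt that it suffices to check the null condition for $J^a_1$ (the higher $J^a_{n>0}$ follow by the commutator $[J^a_1,J^b_1]=f^{ab}_c J^c_2$ and induction on the level), such a minimal vector is genuinely null. If this minimal vector sits at level $N>0$ it is a proper null vector and we are done; the subtlety is the case where the descent terminates inside the horizontal subspace itself.

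The main obstacle, and the step deserving most care, is precisely this horizontal-level case: the descent procedure might drive $w$ all the way down to a nonzero vector $v$ in the horizontal subspace $R$ (level $N=0$). To rule this out I would invoke the irreducibility of the horizontal representation $R$ as an $\mathfrak{sl}_2$-module — one must assume or arrange that $R$ is itself irreducible, so that the $\hat{\mathfrak{g}}$-submodule generated by any nonzero horizontal vector recovers all of $R$ at level $0$, and then by applying lowering operators $J^a_{-n}$ one regenerates the entire graded module, forcing $W=\hat{R}$ and contradicting properness. This reduces the whole statement to the interplay between the finite-dimensional representation theory of the horizontal algebra and the free action of the negative modes $J^a_{n<0}$. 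I expect the delicate points to be verifying that the pairing between positive and negative modes is nondegenerate away from null directions, and handling the boundary case cleanly; the reference \cite{bauer} presumably supplies the contravariant-form machinery that makes this rigorous.
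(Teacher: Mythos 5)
The paper does not actually prove this theorem: it is quoted from the literature with a citation to Bauer--Sochen, so there is no in-text argument to compare against. Your proposal is the standard argument and its outline is sound: a null vector at positive level generates a submodule contained in levels $\geq N$, hence proper; conversely a proper nonzero submodule $W$ must contain a vector of minimal level, which is annihilated by every $J^a_1$ and hence (by the commutator remark) by all $J^a_{n>0}$, producing a null vector unless the descent bottoms out at level $0$. Three points deserve tightening. First, the descent on an arbitrary $w\in W$ is cleanest if you first establish that $W$ is graded by level; this follows because $W$ is stable under the Sugawara $L_0$ of (\ref{emhat}), whose eigenvalue on the level-$N$ piece is $\Delta_j+N$ with distinct values for distinct $N$, so $W$ splits into its homogeneous components -- do not rely on applying $J^a_1$ to a non-homogeneous vector. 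Second, your claim that each graded piece is finite-dimensional is false in the paper's setting, since the admissible horizontal representations include the infinite-dimensional discrete and continuous series of Appendix~\ref{appB}; fortunately finite-dimensionality is never needed. Third, and most importantly, you are right that the converse direction requires the horizontal representation $R$ to be irreducible (or else a level-$0$ vector generating a proper $\mathfrak{g}$-submodule of $R$ yields a proper $\hat{\mathfrak{g}}$-submodule with no positive-level null vector): this is a genuine hypothesis missing from the theorem as stated, and flagging it is consistent with the paper's own remark that level-$0$ ``null vectors'' are of a different nature. With those amendments your argument is complete and does not actually need the contravariant-form machinery you defer to \cite{bauer}.
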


Moreover the existence of null vectors imply an $\mathfrak{sl}_2$-module,

\begin{proposition} \label{nullmodule}
    The left-action of $\mathfrak{sl}_2$ on null vectors yields other null vectors at the same level.
\end{proposition}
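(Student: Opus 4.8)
The plan is to show directly that applying a horizontal generator $J^b_0$ to a null vector $|\chi,N\rangle$ produces a vector that is again annihilated by every positive mode and that still sits at descendance level $N$. Since the horizontal $\mathfrak{sl}_2$ is generated by $\{J^a_0\}_a$, the left-action in question is precisely multiplication by these zero-modes, so it suffices to treat a single generator $J^b_0$ and then extend by linearity and iteration.

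First I would verify that the level of descendance is preserved. Writing $|\chi,N\rangle = \sum_{i,A} c^i_A J^A_N |v_i\rangle$, I would commute $J^b_0$ rightwards through each product of negative modes. Each elementary commutator $[J^b_0, J^{a_j}_{-n_j}] = f^{b a_j}_c J^c_{-n_j}$ carries no central term, since the central piece in (\ref{affinecommutation}) requires the two indices to sum to zero, which fails for $0 + (-n_j) \neq 0$; hence it reproduces a mode of the same negative index. Once $J^b_0$ reaches the horizontal state it acts as $J^b_0 |v_i\rangle = -(t_b)^T |v_i\rangle$ by (\ref{highestweightrep}), another level-zero state in the horizontal representation. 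Every resulting term is therefore a descendant of level exactly $N$.

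The core step is the annihilation condition. For $m>0$ and arbitrary $a$, the commutator in (\ref{affinecommutation}) reduces to $[J^a_m, J^b_0] = f^{ab}_c J^c_m$, again with no central contribution. Consequently
\begin{equation}
    J^a_m \big(J^b_0 |\chi,N\rangle\big) = J^b_0 \big(J^a_m |\chi,N\rangle\big) + f^{ab}_c\, J^c_m |\chi,N\rangle.
\end{equation}
Because $|\chi,N\rangle$ is null, both $J^a_m|\chi,N\rangle$ and $J^c_m|\chi,N\rangle$ vanish for every positive $m$, so the right-hand side is zero. Thus $J^b_0|\chi,N\rangle$ is killed by all positive modes, and together with the level count it is a null vector at level $N$.

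I do not anticipate a genuine obstacle: the argument is essentially the observation that the horizontal subalgebra commutes with the positive-mode annihilation condition up to a rotation among the currents by the structure constants. The only point demanding care is the bookkeeping of the central term, and it is crucial that it arises solely when the two mode indices sum to zero, which never happens when one index is the zero-mode and the other is strictly positive. I would also note that, by the reduction recorded just after the definition of affine null vectors, it would in principle suffice to check the condition for $J^a_1$ alone; but the commutator computation above disposes of all positive modes simultaneously at no extra cost.
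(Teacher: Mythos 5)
Your argument is correct and is essentially the paper's own proof: both rest on the single computation $J^a_{m}J^b_0|\chi,N\rangle = ([J^a_m,J^b_0] + J^b_0 J^a_m)|\chi,N\rangle = f^{ab}_c J^c_m|\chi,N\rangle + 0 = 0$, using that the central term of (\ref{affinecommutation}) cannot appear when one index is zero and the other positive. Your additional bookkeeping of the descendance level and the remark that checking $J^a_1$ alone would suffice are welcome but do not change the route.
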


\begin{proof}
    Let $|\chi,N\rangle$ be a null vector. We check for the null vector condition, $J^m_1J^a_0|\chi,N\rangle = ([J^m_1,J^a_0] + J^a_0 J^m_1) |\chi,N\rangle$ = 0, where we use $[J^m_1,J^a_0] = f^{ma}_e J^e_1$.
\end{proof}

The goal in the next sections is to find null vectors at level 0 and 1, to be able to reproduce the fusion rules of degenerate representations at any level similiar to section (\ref{virasorofusion}).

\subsubsection{Level 0 null vectors} \label{lvl0}

The highest weight representation extension of the finite irreducible representation with half integer spin $j = j_{0,s} := \frac{s-1}{2}, \quad s \in \mathbb{N}_+$ (see appendix \ref{appB}) of $\mathfrak{sl}_2$ contains a well known null vector at level 0 \cite{lorentz1}, 
\begin{equation}
    (J^-_0)^{2j+1} |j,j\rangle = 0.
\end{equation}

This null vector satisifes trivially the null vector condition and is actually of different nature, since it is equal the $0$-element of the representation. Consequently, we don't need to mod out the subspace, in order to derive fusion rules.

\subsubsection{Level 1 null vectors} \label{lvl1}

Malikov, Feigin and Fuks already determined examples of null vectors at each level $N \geq 1$, but they worked them out in a specific $\mathfrak{sl}_2$ basis and horizontal representation~\cite{feigin}. It is therefore of interest to work out Lie algebra basis independent null vectors in a horizontal representation-independent framework. Hence the question arise, if ideas in this framework around $\mathfrak{sl}_2$ can be extended to the Lie algebra $\mathfrak{sl}_N$ or general simple Lie algebras.
We go to a general setting and focus on horizontal representations, which fulfill two properties,
\newline

\noindent \textbf{(1)} \textit{The horizontal representation is indecomposable and thus the quadratic Casimir $C_2$ is proportional to the identity. In the case of $\mathfrak{sl}_2$ we use the parametrization $C_2 = 2j (j+1)$ with reflection symmetry $j \mapsto -1-j$.} \\
\textbf{(2)} \textit{The horizontal representation $R$/$V_R$ canonically extends to a highest weight representation $\hat{R}$/ $\hat{V}_R$.} \\

This leads to the natural definition,

\begin{definition}[Null vector space] \label{NVA}
Consider $U(\hat{\mathfrak{sl}}_2)$ where we mod out the subvectorspaces $W_1 := \mathrm{span}_{\mathbb{C}}(U(\hat{\mathfrak{sl}}_2)J^a_{n>0})$ and $W_2 := \mathrm{span}_{\mathbb{C}}(U(\hat{\mathfrak{sl}}_2)(C_2 - 2j(j+1)))$. We call the resulting vector space $\frac{U(\hat{\mathfrak{sl}}_2)}{W_1 + W_2}$, the null vector space. 
\end{definition}

Remember in the universal enveloping framework two elements are equal if they differ by commutation. For example in $W_2$ we have $J^+_0 J^0_0 J^-_0 = J^0_0J^+_0 J^-_0 - J^+_0J^-_0$ and thus can be replaced by $J^+_0 J^-_0 = j(j+1) - (J^0_0)^2 + J^0_0$. And in $W_1$ the element $J^+_1 J^-_0 = 2 J^0_1 + J^-_0 J^+_1$ is simply zero.

The null vector space encodes the action of the affine highest weight representation, i.e. for any $T \in \frac{U(\hat{\mathfrak{sl}}_2)}{W_1 + W_2}$ we define the canonical map,
\begin{equation} \label{map2}
    T: V_R \rightarrow \hat{V}_R, \quad |v\rangle \mapsto T|v\rangle.
\end{equation}

We also need to redefine the null vector condition accordingly,

\begin{definition}[Null operator]
    Within the null vector space, the null vector condition transforms into $\forall a: J^a_1 \hat{T} = 0$ or equivalently $\forall a: [J^a_1,\hat{T}] = 0$. $\hat{T}$ is  called a null operator and maps any horizontal state to a null vector via the corresponding map (\ref{map2}).
\end{definition}
 
Left-action of $U(\mathfrak{sl}_2)$ on $\hat{T}$ and the freedom of choice of $|v\rangle \in V_R$ forms a set of null operators $U(\mathfrak{sl}_2) \hat{T} U(\mathfrak{sl}_2)$. This space actually forms a vector space over $\mathbb{C}$ and is an $\mathfrak{sl}_2$-module by means of the adjoint action.
The main advantage of the null vector algebra is that we are able to search for null operators, without taking care about the affine highest weight representation.
\newline 

\noindent \textbf{Eigenspaces:}

The null vector space forms an $\mathfrak{sl}_2$-module and therefore we can diagonalize it with respect to the adjoint action $[J^0_0,\circ]$. The resulting eigenspaces $E^Q$ contains elements of constant charge $Q$, where the charge of an element $\prod_{i = 1}^M J^{a_i}_{n_i} \in \frac{U(\hat{\mathfrak{sl}}_2)}{W_1 + W_2}$ is given by $Q := \sum_{i=1}^M a_i$. Such that for an element $J^Q \in E^Q$ we have $[J^0_0, J^Q] = Q J^Q$.
\newline

\noindent \textbf{Universal null vector:}

At level 1 we have determined a basis independent null operator,
\begin{equation} \label{level1null}
    \boxed{\hat{T}^c_1 := 2 K_{ab} J^a_{-1}J^b_{0} J_0^c - 
     t f^c_{ab} J^a_{-1}J^b_0 - t^2 J^c_{-1},}
\end{equation}

where the quadratic Casimir takes the value $C_2 = \frac{t^2}{2} - t$. We want to show explicitly that $\hat{T}^c_1$ fulfills the null vector condition. To do so let's act with $J^d_1$ and use the commutation relation (\ref{affinecommutation}),
\begin{gather}
    \begin{aligned}
        J^d_1 \hat{T}^c_1 &= 2K_{ab}[J^d_1,J^a_{-1}]J^b_0J^c_0 - t f^c_{ab}[J^d_1,J^a_{-1}] J^b_0 - t^2 [J^d_1,J^c_{-1}] \\
        &= (f^{da}_e J^e_0 + k K^{da})(2K_{ab}J^b_0 J^c_0 - t f^c_{ab}J^b_0) - t^2 (f^{dc}_e J^e_0 + k K^{dc}) = ... 
    \end{aligned}
\end{gather}

Then we use the identities $f^{d}_{be}J^e_0J^b_0 = 2 J^d_0$ and $f^{ab}_i f^{ic}_d = 2(K^a_d K^{bc} - K^{ac}K^b_d)$ (this identity only applies to $\mathfrak{sl}_2$) and group together similiar terms,
\begin{gather} \label{step2}
    \begin{aligned}
        ... = J^d_0J^c_0(4 + 2k) -2tJ^c_0J^d_0 + K^{cd}(2t C_2 - t^2 k) + f^{cd}_e J^e_0(\underbrace{t^2 - tk - 2t}_{= 0}).
    \end{aligned}
\end{gather}

Finally we commute the term $-2t J^c_0J^d_0 = -2t (J^d_0J^c_0 + f^{cd}_e J^e_0)$ and get that (\ref{step2}) is zero if and only if $C_2 = \frac{tk}{2} = \frac{t^2}{2} - t$.
Interestingly the null vector operators $T^c_1$ satisfy the commutation relation,
\begin{equation} \label{nullcommutation}
    [J^a_0,T^b_1] = f^{ab}_c T^c_1.
\end{equation}

\noindent \textbf{Set of null operators:}

The main goal is to give a sketch of the proof of the following conjecture, where we work for simplicity in the $(0,\pm)$-basis,

\begin{conjecture}[Universal null operator] \label{conjuniq}
    The null operator $\hat{T}^0_1$ at level 1 and $C_2 = \frac{t^2}{2} - t$ is universal, such that the subspace $\mathrm{span}_{\mathbb{C}}(U(\mathfrak{sl}_2)\hat{T}^{0}_1U(\mathfrak{sl}_2))$ contains the set of all null operators in the null vector space $\{\hat{T}\in \frac{U(\hat{\mathfrak{sl}}_2)}{W_1 + W_2}| J^a_1 \hat{T} = 0\}$.
\end{conjecture}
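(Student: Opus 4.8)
The plan is to prove the two inclusions separately, with the reverse inclusion (every null operator lies in $\mathrm{span}_{\mathbb{C}}(U(\mathfrak{sl}_2)\hat T^0_1 U(\mathfrak{sl}_2))$) carrying all the content. First I would record the easy inclusion: every element of $U(\mathfrak{sl}_2)\hat T^0_1 U(\mathfrak{sl}_2)$ is a null operator. Left-multiplication preserves the null-vector condition by Proposition~\ref{nullmodule}, and for right-multiplication a one-line computation using~(\ref{affinecommutation}) gives $J^a_1(\hat T J^b_0)=[J^a_1,\hat T]J^b_0+\hat T f^{ab}_c J^c_1$, whose first term vanishes because $\hat T$ is null and whose second term lies in $W_1$; hence $\hat T J^b_0$ is again null. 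Because the adjoint action $[J^0_0,\cdot]$ grades the null vector space (Definition~\ref{NVA}) into charge eigenspaces $E^Q$, and both the set of null operators and the candidate span are $\mathfrak{sl}_2$-bimodules respecting this grading, it suffices to argue within each charge sector.

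Next I would fix a normal form for level-$1$ elements. Modulo $W_1$ every such element can be pushed to $\hat T=\sum_{a\in\{+,0,-\}}J^a_{-1}\,p_a$ with $p_a\in U(\mathfrak{sl}_2)$ a polynomial in the horizontal generators, and modulo $W_2$ the relation $(J^0_0)^2=\tfrac12 C_2-J^+_0J^-_0+J^0_0$ lets me eliminate all higher powers of $J^0_0$, giving a PBW-type basis $\{(J^+_0)^{p}(J^0_0)^{\varepsilon}(J^-_0)^{q}\}$ with $\varepsilon\in\{0,1\}$ graded by total degree $d=p+q+\varepsilon$. I would then impose the null condition by computing $J^d_1\hat T$ down to level $0$; using $[J^d_1,J^a_{-1}]=f^{da}_eJ^e_0+kK^{da}$ and the $\mathfrak{sl}_2$ identity $f^{ab}_if^{ic}_d=2(K^a_dK^{bc}-K^{ac}K^b_d)$ this collapses to a finite linear system on the $p_a$ in each charge and degree.

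The core step is an induction on the degree $d$ of the coefficients. The operator $\hat T^0_1$ of~(\ref{level1null}) has coefficients of degree $2$, and right-multiplying it by a degree-$(d-2)$ monomial, together with left $\mathfrak{sl}_2$-action to adjust the charge, produces a null operator whose leading (degree-$d$) symbol I would show matches the most general admissible leading symbol allowed by the linear system at degree $d$. Subtracting this representative from an arbitrary degree-$d$ null operator lowers its degree, and the induction closes once the base cases (the low-degree sectors, where the triplet $\hat T^{\pm}_1,\hat T^0_1$ is checked directly to span the solutions) are verified. Throughout, the Casimir value $C_2=\tfrac{t^2}{2}-t$ enters exactly as in the derivation of~(\ref{level1null}), so the same identity that made $\hat T^0_1$ null is what makes the leading-symbol matching work.

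I expect the inductive step to be the main obstacle: one must show that the leading symbols reachable from $\hat T^0_1$ under the combined left-and-right $\mathfrak{sl}_2$ action genuinely exhaust the space of admissible leading symbols at every degree, i.e. that $\hat T^0_1$ is a cyclic generator of the bimodule with no ``exotic'' high-degree null operators escaping the span. The delicate points are controlling the Casimir quotient $W_2$ so that it creates no new solutions beyond those it identifies, and handling possible low-charge exceptional sectors where the generic degree count could fail; a clean way to organize this would be to decompose the bimodule under left$\times$right $\mathfrak{sl}_2$ into finite-dimensional irreducibles and verify that $\hat T^0_1$ has nonzero projection onto each summand.
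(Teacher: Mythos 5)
The statement you are proving is, in the paper itself, an open conjecture: the author only establishes the weaker Proposition~\ref{prope0} (that $\hat T^0_1$ generates all \emph{finite} null operators in the charge-zero eigenspace $E^0$, via the explicit ansatz~(\ref{an1}), the linear recursion~(\ref{nullrecursion}) on the coefficients $(a_i,b_i,c_i)$, and an induction on the truncation length $N$ of the solution), and explicitly states that no proof was found for the nonzero-charge sectors. Your proposal reproduces the sound but easy parts of this picture in a slightly different dress: the closure of the candidate span under left and right multiplication (your computation $J^a_1(\hat T J^b_0)=[J^a_1,\hat T]J^b_0+\hat T f^{ab}_cJ^c_1$ is correct and matches Proposition~\ref{nullmodule} plus the definition of $W_1$), the reduction to charge eigenspaces $E^Q$, and a normal form at level $1$. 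Your normal form uses $W_2$ to eliminate $(J^0_0)^2$ in favour of $J^+_0J^-_0$, whereas the paper does the opposite and keeps polynomials in $J^0_0$; both are legitimate and roughly equivalent reductions.

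The genuine gap is exactly where you flag it, and it is not a technicality that "closes once the base cases are verified": the claim that the leading symbols reachable from $\hat T^0_1$ under the left--right $\mathfrak{sl}_2$ action exhaust all admissible leading symbols of the linear system, in \emph{every} charge sector $Q\neq 0$ and every degree, is the entire content of the conjecture beyond Proposition~\ref{prope0}. You assert you "would show" this matching but give no mechanism for it, and the paper's own induction does not transport to $Q\neq 0$ because the ansatz~(\ref{an1}) and the recursion~(\ref{nullrecursion}) are specific to $E^0$ (in charge $Q$ the analogous ansatz has a different, $Q$-dependent set of monomials $J^c_{-1}(J^+_0)^p(J^-_0)^q(J^0_0)^\varepsilon$, and the counting of solutions versus generators must be redone; low-charge sectors may be exceptional, as you note). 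Your suggestion to decompose the bimodule into finite-dimensional left$\times$right irreducibles and check that $\hat T^0_1$ projects nontrivially onto each summand is a plausible strategy and arguably more systematic than the paper's coefficient recursion, but as written it is a program, not a proof: without carrying out that decomposition and the dimension count of the solution space of the null condition in each summand, the statement remains at the same conjectural status it has in the paper.
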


To begin with, we want to show the weaker proposition,

\begin{proposition} \label{prope0}
    $\hat{T}^0_1$ generates any finite null operator in $E^0$. The null operators are generated by $\hat{T}^0_1 \mathrm{span}_{\mathbb{C}}(J_0^0)$.
\end{proposition}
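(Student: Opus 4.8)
The plan is to split the statement into a sufficiency part (every element of $\hat{T}^0_1\,\mathbb{C}[J^0_0]$, i.e. every right multiple $\hat{T}^0_1(J^0_0)^n$, is a null operator lying in $E^0$) and a completeness part (there are no others), and to attack completeness by reducing a general charge-$0$ element at level of descendence $1$ to a normal form and solving the linear system imposed by $J^a_1\hat{T}=0$. By the remark following the definition of an affine null vector it suffices to impose this for $J^0_1,J^+_1,J^-_1$, and by Proposition \ref{nullmodule} the null operators carry an $\mathfrak{sl}_2$-module structure graded by charge, which is what makes $E^0$ a natural unit of analysis.

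Sufficiency I would establish purely inside the quotient. Right multiplication by $J^0_0$ preserves $W_1+W_2$: commuting $J^0_0$ to the left past any $J^a_{n>0}$ only produces further positive modes, so $W_1 J^0_0\subseteq W_1$, while $J^0_0$ commutes with the horizontal Casimir $C_2$, so $W_2 J^0_0\subseteq W_2$. Hence right multiplication by every power $(J^0_0)^n$ is well defined on $\frac{U(\hat{\mathfrak{sl}}_2)}{W_1+W_2}$, and by associativity $J^a_1\big(\hat{T}^0_1(J^0_0)^n\big)=(J^a_1\hat{T}^0_1)(J^0_0)^n$, which vanishes because $J^a_1\hat{T}^0_1\in W_1+W_2$ and that subspace is stable under right multiplication by $J^0_0$. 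Since $\hat{T}^0_1$ has charge $0$ and right multiplication by $J^0_0$ preserves both charge and level, each $\hat{T}^0_1(J^0_0)^n$ sits in $E^0$ and is null.

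For completeness I would first fix a normal form. Modding out $W_1$ lets me push every positive mode to the right, where it vanishes, so any level-$1$ element is a finite combination of $J^a_{-1}$ times a zero-mode word; modding out $W_2$ and reducing the zero-mode part of $U(\mathfrak{sl}_2)$ via $J^+_0 J^-_0 = j(j+1)-(J^0_0)^2+J^0_0$ removes all mixed $J^+_0,J^-_0$ products. Imposing charge $0$ then leaves exactly three families,
\begin{equation}
  \hat{T}=\sum_{n\ge 0}\Big(\alpha_n\, J^0_{-1}(J^0_0)^n+\beta_n\, J^+_{-1}J^-_0(J^0_0)^n+\gamma_n\, J^-_{-1}J^+_0(J^0_0)^n\Big),
\end{equation}
with only finitely many nonzero coefficients. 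I would apply $J^0_1$ and $J^+_1$ to each basis vector, commute them rightward with (\ref{affinecommutation}), discard the positive-mode tails as elements of $W_1$, and rewrite the surviving zero-mode products in the normal form above; the outputs are polynomials in $J^0_0$ (multiplied by $1$ or $J^+_0$) in $E^0$ and $E^{+1}$ at level $0$, and setting their coefficients to zero yields recursions among $\alpha_n,\beta_n,\gamma_n$. The $J^-_1$ condition I would obtain for free from the Cartan involution $t^+\leftrightarrow t^-,\ t^0\mapsto -t^0$, under which $\hat{T}^0_1$ and the whole null-operator space are stable, halving the computation.

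The hard part will be showing that the solution space of this infinite system is no larger than $\hat{T}^0_1\,\mathbb{C}[J^0_0]$. I expect to run a division/descent argument on the $J^0_0$-degree: the null relations should force the $J^0_{-1}$-family to be the dominant one, and since $\hat{T}^0_1(J^0_0)^m$ has leading term $4\,J^0_{-1}(J^0_0)^{m+2}$, for any null $\hat{T}$ of top degree $D$ I can subtract a unique multiple $\hat{T}^0_1(J^0_0)^{D-2}$ to lower the degree, iterating down to a base case handled by finite-dimensional linear algebra, where the only surviving null operator is a scalar multiple of $\hat{T}^0_1$. The genuine obstacle is the bookkeeping of the reduction modulo $W_2$ at each step: applying the $J^a_1$ repeatedly generates zero-mode products that must be consistently rewritten in the chosen normal form before coefficients can be compared, and verifying that this never introduces spurious solutions outside $\hat{T}^0_1\,\mathbb{C}[J^0_0]$ is where the care is needed.
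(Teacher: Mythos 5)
Your plan matches the paper's proof in all essentials: you reduce to the same three-family normal form (the paper's ansatz (\ref{an1})), extract the same linear recursions from $J^a_1\hat{T}=0$, and your degree-descent subtracting multiples of $\hat{T}^0_1(J^0_0)^{D-2}$ is the paper's induction on the top index of the free $J^+_{-1}J^-_0$-family in different clothing, since the recursion ties the leading $J^0_{-1}$-coefficient to twice that family's top coefficient. The explicit sufficiency check (stability of $W_1+W_2$ under right multiplication by $J^0_0$) and the Cartan-involution shortcut are small additions the paper omits, but nothing in your route diverges from its argument.
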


In general we expect, that every null operator can be written as a linear combination of products $\prod^N_{i = 1} J^{a_i}_{n_i}$ and therefore the following proposition helps,

\begin{proposition}
    Every product $\prod^N_{i = 1} J^{a_i}_{n_i}$ at level $(-\sum_i n_i) = 1$ can be rewritten as a linear combinations of products $J^{c}_{-1} \prod_j J^{b_j}_0$ such that the charge $c + \sum_j b_j = \sum_i a_i$ stays invariant.
\end{proposition}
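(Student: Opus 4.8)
The plan is to reduce an arbitrary level-$1$ monomial to Poincaré--Birkhoff--Witt (PBW) normal form and then exploit the fact that, modulo $W_1$, every ordered monomial containing a raising operator vanishes. First I would fix the PBW ordering on $U(\hat{\mathfrak{sl}}_2)$ in which negative modes precede zero modes, which in turn precede positive modes, with the $(0,\pm)$-generators at a given mode ordered in some fixed way. By the PBW theorem the resulting ordered monomials form a basis, so any product $\prod_{i=1}^N J^{a_i}_{n_i}$ can be rewritten, using only the commutation relations (\ref{affinecommutation}), as a linear combination of such ordered monomials.

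The two invariants that make the reduction work are the total mode and the charge, and I would record at the outset that both are preserved under (\ref{affinecommutation}). The bracket $[J^a_n,J^b_m]=f^{ab}_cJ^c_{n+m}+nkK^{ab}\delta_{n+m,0}$ replaces two generators of total mode $n+m$ and total charge $a+b$ either by a single generator $J^c_{n+m}$ (with $c=a+b$, since in the $(0,\pm)$-basis $f^{ab}_c\neq 0$ only then) or, when $n+m=0$, by a scalar of mode $0$ and charge $0$ (as $K^{ab}\neq 0$ only for $a+b=0$). Hence the descendence level $-\sum_i n_i=1$ and the charge $\sum_i a_i$ are unchanged throughout the straightening, and the central term can only decrease the number of factors.

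Next I would argue that, after ordering, only monomials of the desired shape survive in the null vector space. Any ordered monomial containing a mode $n>0$ has such a generator as its rightmost factor, so it is a left multiple of $J^a_{n>0}$ and therefore lies in $W_1$, vanishing in the quotient. What remains are ordered monomials with all modes $\leq 0$ and total mode $-1$. Since each negative mode is $\leq -1$, their sum equals $-1$ only if there is exactly one negative factor, necessarily $J^c_{-1}$, with every remaining factor a zero mode; the PBW order then places it as $J^c_{-1}\prod_j J^{b_j}_0$, which is exactly the target form, and charge preservation yields $c+\sum_j b_j=\sum_i a_i$.

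Finally I would organise everything as a strong induction on the number $N$ of factors. The base case $N=1$ forces $J^c_{-1}$, already in the desired form. In the inductive step, straightening the monomial produces the unique fully ordered term of length $N$ (handled by the previous paragraph) together with strictly shorter commutator corrections of length $N-1$ or $N-2$; by level preservation these are again at level $1$, so the induction hypothesis applies and they too reduce to the required form. I expect the only delicate point to be the bookkeeping that each commutator correction genuinely lowers the number of factors while staying at level $1$ and fixed charge, so that the recursion terminates. Once that is in place the statement follows; note in particular that $W_2$ plays no role here, being reserved for the subsequent reduction of the zero-mode tail $\prod_j J^{b_j}_0$.
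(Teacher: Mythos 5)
Your proposal is correct and follows essentially the same route as the paper: straighten the monomial so that positive modes sit on the right (where they die in the quotient by $W_1$), observe that level and charge are conserved by the commutation relations, and conclude that the surviving ordered terms have exactly one mode-$(-1)$ factor on the left with a zero-mode tail. The explicit appeal to PBW and the induction on the number of factors merely formalize the termination of the straightening procedure that the paper leaves implicit.
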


\begin{proof}
    Identify all the generators $J^a_{n>0}$ and commute them to the right, to be left with generators at level 0 and one generator at level 1. Commute the level 1 generator to the left. All this operations conserve the level and charge due to the commutation relation (\ref{affinecommutation}).
\end{proof}

In addition we conclude that if $\hat{T} = \lambda_1 \prod_i J^{a_i}_{n_i} + \lambda_2 \prod_i J^{b_i}_{m_i}$ is a null operator with charge $A\sum a_i \neq \sum b_i$, then each term is a null operator by its own, because of linear independence of vectors in different eigenspaces $E^Q$. Therefore at each charge and level, we can rearrange the null vector candidate into a simple ansatz. The ansatz at level one and in the Eigenspace of zero charge $E^0$ can generally be written as,
\begin{equation} \label{an1}
    \hat{T} = J^+_{-1}J^-_0 (\sum_{i\geq 0}a_i (J^0_0)^i) +  J^0_{-1}(\sum_{i\geq 0}b_i (J^0_0)^i) +  J^-_{-1}J^+_0 (\sum_{i\geq 0}c_i (J^0_0)^i).
\end{equation}

Applying the null operator condition $J^a_1 \hat{T} = 0$ to this ansatz yields the solution,
\begin{equation} \label{nullrecursion}
    b_{i+1} = 2 a_i - t a_{i+1}, \quad c_{i+1} = \frac{2}{t}(a_i - c_i)  - a_{i+1},
\end{equation}

where we define that quantities with negative indices vanish.
\newline

\noindent \textbf{Finite solutions:}

A finite solution of (\ref{nullrecursion}) is defined by the condition $a_{N + (i \geq 1)} = 0$ with $N \in \mathbb{N}_{*}$, which implies $b_{N + 1 + (i \geq 1)} = 0$ and $c_{N+(i \geq 1)} = 0$. The recursion (\ref{nullrecursion}) constrains $a_N = \frac{2}{t}(a_{N-1} - c_{N-1}(a_{i\leq N-1}))  - a_{N}$ and therefore we can freely choose $A_N := (a_0,...,a_{N-1})$, to determine any finite solutions for any $N$. We write $\hat{T}_N(A_N)$ for the null vector operator corresponding to the solution $A_N$. 
The space of finite solutions $\{\hat{T}_N(A_N)| A_N \in \mathbb{C}^{N}, N \in \mathbb{N}\}$ forms a vector space because the recursion (\ref{nullrecursion}) is linear.

We will give two examples of finite solutions at $N = 1,2$,
\begin{gather}
    \begin{aligned}
        \hat{T}_1(t,2) =& J^0_{-1}(4(J^0_0)^2 - t^2) + J^-_{-1}J^+_0(2J_0^0 - t) + J_{-1}^+J_0^-(2J_0^0 + t) \\
        \hat{T}_2(t,4,4/t) =& J^0_{-1}(-t^2 - 2t J^0_0 +4(J^0_0)^2 + 8/t (J^0_0)^3) + J^-_{-1}J^+_0(-t + 4/t(J^0_0)^2) \\
        &+ J_{-1}^+J_0^-(t + 4 J^0_0 + 4/t (J^0_0)^2).
    \end{aligned}
\end{gather}

Actually we have the redundancy $\hat{T}_2= \hat{T_1} (2J_0^0 + t)/t$. The solution $\hat{T}_1(t,2)$ corresponds to the initial result (\ref{level1null}) with $c = 0$.
After working out the finite solutions in $E^0$, we are ready to proof proposition \ref{prope0},

\begin{proof}
    The proof is done by induction on the positive integer $N$, such that $a_{N+i} = 0$ for $i\geq 1$. We have already seen that at $N=2$ it can be related to $N=1$ and thus assume it is possible up to $N-1$.
    \\
    We start by noticing that, due to the linear character of the solution $(A_N,a_N)$ we can choose without loss of generality $a_0 = 1$ and write $(A_N,a_N) = (1,0,...,0,a_N') + (0,a_1,...,a_N - a_N')$. Because of $0 = a_0 = c_0 = b_0$ in the second term, we can factor out $a_1 J^0_0$ in the ansatz (III.15) and relate it to the $N-1$ case.
    \\
    To decompose the first term $(1,0,...,0,a_N')$, we notice first that the recursion (\ref{nullrecursion}) implies $a_N' =(-1)^{N+1} (\frac{2}{t})^N$. Therefore we decompose it further $(1,0,...,0,a_N') = (1,0...,0,a_{N-1}') + (-1)^{N+1}(\frac{2}{t})^{N-1} (0,...,0,1,\frac{2}{t})$ and factor out $(J_0^0)^{N-1}$ in the second term. We see both terms can be related to lower $N$-cases. This construction is equivalent to $\hat{T}(1,0,...,a_N) = \hat{T}(1,0,...,a_{N-1}) + (-1)^{N+1}(\frac{2}{t})^{N-1} \hat{T}(1,\frac{2}{t})(J_0^0)^{N-1}$. Using the induction assumption again, we conclude that we can decompose every finite solution to $\hat{T}^0_1 \sum_i a_i (J_0^0)^i$.
\end{proof}

In different eigenspaces with non-zero charge $E^Q$ we can create null operators by acting multiple time with $J^{\pm}_0$ on $\hat{T}^0_1$ from the left or right. The last step would be to proof that at each charge $ Q \neq 0$ all finite solutions can be generated by $\hat{T}^0_1$. We haven't found a promising proof, but it should closely follow the proof of proposition \ref{prope0} and therefore leave it as a conjecture,

\begin{conjecture}
    The null vectors in each eigenspace $E^Q$ with charge $0 \neq Q \in \mathbb{Z}$ are generated by the unique element $T^{0}_1$.
\end{conjecture}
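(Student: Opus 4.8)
The plan is to mirror the argument that established Proposition~\ref{prope0}, replacing the charge-zero ansatz (\ref{an1}) by its charge-$Q$ analogue and then reducing every finite solution to operators generated from $\hat{T}^0_1$ by the left and right action of $U(\mathfrak{sl}_2)$. First I would set up the general ansatz in a fixed eigenspace $E^Q$ with $Q>0$ (the case $Q<0$ following by the reflection $+\leftrightarrow-$). By the normal-ordering proposition together with the Casimir relation in $W_2$, every level-one word of charge $Q$ reduces to a combination of $J^+_{-1}(J^+_0)^{Q-1}$, $J^0_{-1}(J^+_0)^{Q}$ and $J^-_{-1}(J^+_0)^{Q+1}$, each times a polynomial in $J^0_0$, so that
\[
\hat{T} = J^+_{-1}(J^+_0)^{Q-1}\Big(\sum_{i\geq 0}a_i (J^0_0)^i\Big) + J^0_{-1}(J^+_0)^{Q}\Big(\sum_{i\geq 0}b_i (J^0_0)^i\Big) + J^-_{-1}(J^+_0)^{Q+1}\Big(\sum_{i\geq 0}c_i (J^0_0)^i\Big).
\]

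Next I would impose the null-operator condition $J^a_1\hat{T}=0$ for $a\in\{0,\pm\}$. Commuting each $J^a_1$ through the level-one generator via (\ref{affinecommutation}) and reducing the resulting level-zero words at the fixed Casimir value $C_2=\frac{t^2}{2}-t$ should produce, exactly as in (\ref{nullrecursion}), a linear recursion among $a_i,b_i,c_i$ whose free data is a finite tuple $(a_0,\dots,a_{N-1})$. This identifies the null space $\mathcal{N}^Q:=\mathcal{N}\cap E^Q$ (where $\mathcal{N}$ denotes all null operators) as a finite-dimensional space at each truncation degree $N$, graded in the same way as $\mathcal{N}^0$.

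The core step is the generation. By Proposition~\ref{nullmodule} the left action of $\mathfrak{sl}_2$ preserves null operators, so $(J^+_0)^Q$ maps $\mathcal{N}^0$ into $\mathcal{N}^Q$; combined with right multiplication by $J^0_0$ and with the triplet relation (\ref{nullcommutation}), which already places $\hat{T}^{\pm}_1$ inside $\mathrm{span}_{\mathbb{C}}(U(\mathfrak{sl}_2)\hat{T}^0_1U(\mathfrak{sl}_2))$, this yields an explicit family of null operators in the bimodule that lie in $E^Q$. I would then run the inductive decomposition of Proposition~\ref{prope0}: normalise $a_0=1$, peel off the lower-degree tail using linearity of the recursion, and factor out the leading $(J^0_0)^{N-1}$, at each stage matching the degree-$N$ solution against a generated element. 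If the recursion retains the same self-similar structure as (\ref{nullrecursion}), this induction closes and every finite solution is expressed through $\hat{T}^0_1$.

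The hard part will be the final matching. For $Q=0$ one needs only right multiplication by polynomials in $J^0_0$, but for $Q\neq0$ the candidate generators mix left powers of $J^+_0$ (to reach the charge sector) with right $\mathfrak{sl}_2$ words, and it is not clear a priori that these span all of $\mathcal{N}^Q$ rather than a proper subspace. The cleanest resolution I foresee is a dimension count: show that the number of independent finite solutions of the charge-$Q$ recursion at each degree $N$ equals the number of independent operators produced from $\hat{T}^0_1$, forcing the inclusion $\mathcal{N}^Q\subseteq\mathrm{span}_{\mathbb{C}}(U(\mathfrak{sl}_2)\hat{T}^0_1U(\mathfrak{sl}_2))$. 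Controlling this count uniformly in $Q$, and verifying that the Casimir reduction introduces no unexpected linear dependences among the generated candidates, is exactly where I expect the difficulty to concentrate, and is presumably why the authors leave the statement as a conjecture.
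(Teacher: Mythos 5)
There is no proof of this statement in the paper to compare against: the authors explicitly leave it as a conjecture, remarking only that a proof ``should closely follow the proof of Proposition~\ref{prope0}.'' Your proposal is precisely that strategy, fleshed out one step further --- the charge-$Q$ normal form $J^+_{-1}(J^+_0)^{Q-1}p_a(J^0_0)+J^0_{-1}(J^+_0)^{Q}p_b(J^0_0)+J^-_{-1}(J^+_0)^{Q+1}p_c(J^0_0)$ is the correct analogue of the ansatz (\ref{an1}) once the Casimir relation in $W_2$ is used to eliminate $J^+_0J^-_0$, and the observation that (\ref{nullcommutation}) together with Proposition~\ref{nullmodule} produces candidate generators in $E^Q$ is sound. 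But what you have written is a plan, not a proof, and you say so yourself: the decisive step --- that the two-sided orbit $U(\mathfrak{sl}_2)\hat{T}^0_1U(\mathfrak{sl}_2)\cap E^Q$ exhausts the solution space of the charge-$Q$ recursion --- is exactly the step you defer to a ``dimension count'' that is never performed. This is the same obstruction that stopped the authors.

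To make the gap concrete: for $Q=0$ the induction of Proposition~\ref{prope0} closes because right multiplication by powers of $J^0_0$ alone reproduces the one free tail coefficient at each truncation degree, so generation and solution space match degree by degree. For $Q\neq 0$ the candidate generators are elements such as $(J^+_0)^{Q-k}\,\hat{T}^0_1\,(J^+_0)^{k}p(J^0_0)$ for $0\leq k\leq Q$, and after reducing modulo $W_2$ (which replaces $J^+_0J^-_0$ by a polynomial in $J^0_0$) it is not evident that these remain linearly independent, nor that their number matches the dimension of the space of finite solutions of the charge-$Q$ recursion at each degree $N$ --- a dimension you have not computed, since you have not actually derived the charge-$Q$ analogue of (\ref{nullrecursion}). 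Until both sides of that count are pinned down, the inclusion $\mathcal{N}^Q\subseteq\mathrm{span}_{\mathbb{C}}(U(\mathfrak{sl}_2)\hat{T}^0_1U(\mathfrak{sl}_2))$ is not established, and the statement remains, as in the paper, a conjecture. A realistic way forward would be to first derive the charge-$Q$ recursion explicitly, show its finite solutions at degree $N$ form a space of the same dimension as in the $Q=0$ case, and then exhibit an explicitly triangular (hence independent) family of generated elements.
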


\textit{Remark:} Note that indecomposable horizontal representations do not contain a unique vector to generate the whole vector space by repeatedly applying the action, whereas an irreducible does. Therefore we expect that every null vector in the irreducible case, can be reached with the map (\ref{map2}) and the subspace $\mathrm{span}_{\mathbb{C}}(U(\mathfrak{sl}_2)\hat{T}^{0}_1U(\mathfrak{sl}_2))$ from conjecture \ref{conjuniq}. It is not entirely clear if the same holds for indecomposable representations.

\subsection{$\hat{\mathfrak{sl}}_2$ fusion rules} \label{fusionsl2}

As in the Virasoro case, we deduce the analog of fusion but with primary fields in $\hat{\mathfrak{sl}}_2$-degenerate representations (shorthand \textit{degenerate field}). Generally we name the null operator $\hat{T}^{r,s}$ and the degenerate representation $\hat{R}^{r,s}$, if they correspond to the spin $j_{r,s} = \frac{s-1}{2} - \frac{t}{2} r$. On the other side we name the highest weight representation $R_j$ with well defined qadratic Casimir $2j(j+1)$ and spin $j$.
The idea is then to use the prominent relation $0 = \hat{T}^{r,s}|v\rangle$ of degenerate representations within N-point functions,
\begin{equation} \label{nullvectorequation}
    0 = \langle \hat{T}^{r,s} \Phi_{j_{r,s}} \Phi^{R_1}_{j_1}\cdot ... \cdot \Phi^{R_{N-1}}_{j_{N-1}} \rangle,
\end{equation}

where the degenerate field $\Phi_{r,s}$ corresponds to the state $|v\rangle$. Then we apply the affine Ward identities (\ref{sl2global}) resp. (\ref{sl2local}) to deduce the differential equation of (\ref{nullvectorequation}) in the isospin framework. We call these equations (\ref{nullvectorequation}) \textit{null vector equations}. Take care that the consecutive action acts as $J^a_0 J^b_0 = (-D(t^b))(-D(t^a))$.

\subsubsection{Finite level 0}

For the finite irreducible horizontal representation (\ref{lvl0}), we get the null vector equation $(J^-_0)^{2j + 1}|j,j\rangle = 0$ with spin $j = j_{0,s}$. Using the global Ward identity (\ref{sl2global}) and the $x$-basis (\ref{xbasis}), this constrains the 3-point function (\ref{affine3p}),
\begin{equation}
    0 = (\partial_x)^{2j+1} \langle \Phi^j_x(z) \Phi^{j_2}_{x_2}(z_2) \Phi^{j_3}_{x_3}(z_3) \rangle,
\end{equation}

For a non-zero 3-point function the condition on the spins read,
\begin{equation}
    \prod_{i=0}^{2j}(j_3 - j_2 - j_{0,s} + i) = 0.
\end{equation}

This can be identified as the fusion rule of a generic field with a degenerate field with spin $j_{0,s}$,
\begin{equation}
    \hat{R}^{0,s} \times \hat{R}_j = \hat{R}_{j + j_{0,s}} + \hat{R}_{j + j_{0,s-2}} + ... + \hat{R}_{j - j_{0,s}}.
\end{equation}

Thus in general the field with spin $j_{0,1} = 0 \Rightarrow \Delta_{j = 0} = 0$ \footnote[1]{The conformal dimension of an $\mathfrak{sl}_2$ primary is given by $\Delta_j = \frac{j(j+1)}{t}$} acts as the identity field $\Phi^0(z) \propto id$.

\subsubsection{Generic representation level 1}

Altough Malikov, Feigin and Fuks have determined a null vector at level 1, we would not be able to use it directly to determine the null vector equation. This is because it is written in a basis which diagonalizes $J^0_0$. Therefore we take the degenerate representation with null operator $\hat{T}_1^0$ (\ref{level1null}) on any horizontal state $\hat{T}^0_1|v\rangle$ and spin $j_{1,1}$. This approach yields the following null vector equation in the case of the 3-point function,
\begin{gather}
    \begin{aligned}
        0 = \sum_{s = 2,3} \frac{1}{z_{s1}} \bigg(2K_{ab}D_{x_1}(t^0)D_{x_s}(t^a)D_{x_1}(t^b) + t f^0_{ab} D_{x_s}(t^a) D_{x_1}(t^b)& \\
        - t^2 D_{x_s}(t^0)\bigg) \langle \Phi^{j_{1,1}}_{x_1}(z_1) \Phi^{j_2}_{x_2}(z_2)\Phi^{j_3}_{x_3}(z_3)\rangle&.
    \end{aligned}
\end{gather}

In the limit $z_1, z_3 \rightarrow 0, \infty$ only the term proportional to $\frac{1}{z_{21}}$ stays alive and in the $x$-basis it becomes the differential equation,
\begin{gather}
    \begin{aligned}
        0 = \big(&-2x_1 x_{12}^2 \partial_2(\partial_1)^2 + 4j_2 x_1 x_{21} (\partial_1)^2 + 4x_1(1+t)x_{21} \partial_2\partial_1\\
        &- 4j_2 x_1 (1+t)\partial_1 - 2x_1 t(1+t) \partial_2\big) \langle \Phi^{j_{1,1}}_{x_1}(z_1) \Phi^{j_2}_{x_2}(z_2)\Phi^{j_3}_{x_3}(z_3) \rangle.
    \end{aligned}
\end{gather}

We use the solution of the 3-point function (\ref{affine3p}) and get the following condition on the spins,
\begin{gather}
    \begin{aligned}
        0 = \big(&8 j_2^3 + 8 j_3^3 - 4 j_3^2 (t-2) - 2 j_3 t^2 + (t-2)t^2 \\
        &- 4 j_2^2 (2 j_3 + t - 2) - 2 j_2 (4 j_3^2 - 4 j_3 (t - 2) + t^2)\big).
    \end{aligned}
\end{gather}

If the 3-point function is non-zero, this equation has the three solutions $j_3 \in \{j_2 \pm j_{1,1}, -1-j_2 + j_{1,1}\}$ \footnote[2]{The parametrization $C_2 = 2j(j+1)$ contains the symmetry $j \rightarrow -1 - j$ and we get equivalent solutions up to this symmetry.}. Therefore the fusion rules can be determined,
\begin{equation}
    \hat{R}^{1,1} \times \hat{R}_j = \hat{R}_{j+j_{1,1}} + \hat{R}_{j-j_{1,1}}.
\end{equation}

\subsubsection{Generalized Fusion Rule}

The Fusion algebra is associative and commutative, such that we are able to generate the fusion rule of a degenerate field with spin $j_{r,s}$ by repeatedly fusing fields with spin $j_{0,2}$ and $j_{1,1}$  (the field corresponding to $j_{0,1} = 0$ is the identity field $\hat{R}^{0,1} \times \hat{R}_j = \hat{R}_j$).
To see that these two spins are enough to generate the fusion rules, we take a look at the fusions,
\begin{equation}
    \hat{R}^{1,1} \times \hat{R}^{1,1} = \hat{R}^{2,1} \oplus \hat{R}^{0,1}, \quad \hat{R}^{0,2} \times \hat{R}^{0,2} = \hat{R}^{0,3} \oplus \hat{R}^{0,1}, \quad  \hat{R}^{0,2} \times \hat{R}^{1,1}  = \hat{R}^{1,2}.
\end{equation}

Thus $\hat{R}^{1,1}$ increases or decreases the left index by one unit and $\hat{R}^{0,2}$ increases the right index by one unit. Using this we can write down, how a generic highest weight representation $\hat{R}_j$ fusions with a degenerate representation $\hat{R}^{r,s}$:
\begin{equation} \label{sl2fusion}
    \hat{R}^{r,s} \times \hat{R}_j = \hat{R}_{j - j_{r,s}} + \hat{R}_{j - j_{r-2,s}} + \hat{R}_{j - j_{r,s-2}} + ... + \hat{R}_{j - j_{-r,-s + 2}}
\end{equation}

The fusion between two degenerate fields, respecting associativity and commutativity, yields the closed formula,
\begin{equation}
    \hat{R}^{r,s} \times \hat{R}^{a,b} = \sum_{i \underset{2}{=} |r-a|}^{r+a} \sum_{j \underset{2}{=} |s-b|+1}^{s + b - 1} \hat{R}^{i,j}.
\end{equation}

The derived fusion rule (\ref{sl2fusion}) coincide with previous work, for example in \cite{ribaultplane}, \cite{bauer}, \cite{awata}.

\section*{Conclusion}

In this thesis we have investigated the poles in the central charge the torus 1-point block and null vectors in highest weight representation of $\hat{\mathfrak{sl}}_2$.

By calculating the first four orders of the recursion relation of the 1-point block (\ref{horder}), we have found that there seems to be a very specific algebraic form of the $c$-pole free expression $K_N$ (\ref{zamoanddaro}). It is a laurent polynomial in the variable $\beta := b^2$ of degree $Q_N := \big(\sum_{1 \leq mn \leq N} 1 \big)- N$, with prefactors depending on the external $\Delta_1$ and internal $\Delta$ conformal dimension. Altough it seems hard to get rid of the poles in the central charge, we have found that an expansion around $\beta = 0$ of $K_N$, can be used to find the unknown prefactors of the Laurent-polynomial. The possibility to calculate a pole-free torus 1-point block, helps to determine numerically 1-point functions of CFTs with rational central charge. For future work, calculations and checks at higher order will help to affirm that the algebraic form holds for orders higher than four. 

Within the second part, we moved to CFTs with affine Lie algebras as underlying symmetry, where we specially focused on null vectors and degenerate representations. In the case of $\hat{\mathfrak{sl}}_2$, we have found a $\mathfrak{sl}_2$-basis independent and horizontal representation independent (up to indecomposability) null operator at level 1 with quadratic Casimir $C_2 = \frac{t^2}{2} - t$,
\begin{equation}
    \hat{T}^c_1 := 2 K_{ab} J^a_{-1}J^b_{0} J_0^c - 
     t f^c_{ab} J^a_{-1}J^b_0 - t^2 J^c_{-1}.
\end{equation}

Applying the null operator $\hat{T}^c_0$ on any state of the horizontal representation, generates a null vector in the usual sense. In this framework we derived the fusion rules of degenerate representations with general highest weight representations, which coincides with earlier work.

\section*{Acknowledgement}

I would like to express my deepest gratitude to my supervisor Dr. Sylvain Ribault, who invested a lot of time to discuss the content of this work and for his all-time valuable advice. A special thanks goes to the Institute de Physique Théoretique for hosting me during the M2-internship and providing me this very special opportunity. Last but not least I want to thank my family without their help I wouldn't be at this point where I am.

\newpage

\appendix

\section{Singularity free at order 3 and 4} \label{appA}

The $c$-pole free expression $K_N$ (\ref{zamoanddaro}) at order 3 and 4 have been calculated and nummerically checked to coincide with the expressions (\ref{horder}) outside the poles in the central charge. To express the solutions, we recall the $b$-symmetric Laurent-polynomials,
\begin{equation}
    B_0 = 1, \quad B_j = \beta^j + \beta^{-j}. 
\end{equation}

At order 3 the solutions reads,
\begin{gather} \label{K3}
    \begin{aligned}
        K_3 &= \frac{1}{192} \bigg\{\bigg(144 \Delta + 288 \Delta^2 - 144 \Delta \Delta_1 + 48 (\Delta_1-1) \Delta_1 + 144 \Delta \Delta_1^2 - 12 (\Delta_1 - 1) \Delta_1^2 \\
        &+ 12 (\Delta_1-1) \Delta_1^3\bigg) B_2 + \bigg(888 \Delta + 960 \Delta^3 - 1840 \Delta \Delta_1 + 440 (\Delta_1-1) \Delta_1 + 1574 \Delta \Delta_1^2 \\
        &- 164 (\Delta_1-1) \Delta_1^2 - 332 \Delta \Delta_1^3 + 68 (\Delta_1-1) \Delta_1^3 + 22 \Delta \Delta_1^4 + 24 \Delta^2 (76 - 75 \Delta_1 + 27 \Delta_1^2)\bigg) B_1 \\
        &+ \bigg(1536 \Delta + 768 \Delta^4 - 3232 \Delta \Delta_1 + 824 (\Delta_1-1) \Delta_1 + 2535 \Delta \Delta_1^2 - 323 (\Delta_1-1) \Delta_1^2 \\
        &- 590 \Delta \Delta_1^3 + 115 (\Delta_1-1) \Delta_1^3 + 39 \Delta \Delta_1^4 + 96 \Delta^3 (24 - 19 \Delta_1 + 3 \Delta_1^2) + 8 \Delta^2 (342 - 411 \Delta_1 \\
        &+ 238 \Delta_1^2 - 20 \Delta_1^3 + \Delta_1^4)\bigg)\bigg\}
    \end{aligned}
\end{gather}

At order 4 the solutions reads,
\begingroup
\allowdisplaybreaks
\begin{subequations} \label{K4}
    
\begin{align*}
        K_4 &= \frac{45}{2048} \bigg(96 \Delta + 288 \Delta^2 + 192 \Delta^3 - 36 \Delta_1 - 168 \Delta \Delta_1 - 144 \Delta^2 \Delta_1 + 52 \Delta_1^2 + 192 \Delta \Delta_1^2 + 144 \Delta^2 \Delta_1^2 \\ &- 33 \Delta_1^3 - 48 \Delta \Delta_1^3 + 19 \Delta_1^4 + 24 \Delta \Delta_1^4 - 3 \Delta_1^5 + \Delta_1^6\bigg) B_4 + \frac{3}{4096}\bigg(45120 \Delta + 141600 \Delta^2 \\ & + 132000 \Delta^3 + 41280 \Delta^4 - 14652 \Delta_1 - 92604 \Delta \Delta_1 - 136152 \Delta^2 \Delta_1 - 61392 \Delta^3 \Delta_1 + 27284 \Delta_1^2 \\ &+ 109180 \Delta \Delta_1^2 + 105984 \Delta^2 \Delta_1^2 + 26832 \Delta^3 \Delta_1^2 - 22281 \Delta_1^3 - 49095 \Delta \Delta_1^3 - 26064 \Delta^2 \Delta_1^3 \\ & + 11363 \Delta_1^4 + 16885 \Delta \Delta_1^4 + 4392 \Delta^2 \Delta_1^4 - 2091 \Delta_1^5 - 1749 \Delta \Delta_1^5 + 377 \Delta_1^6 + 103 \Delta \Delta_1^6\bigg) B_3  \\ & + \frac{1}{8192}\bigg(1423008 \Delta + 4167168 \Delta^2 + 4189536 \Delta^3 + 2013312 \Delta^4 + 446976 \Delta^5 - 442908 \Delta_1 \\ &- 3093480 \Delta \Delta_1 - 5225328 \Delta^2 \Delta_1 - 3529776 \Delta^3\Delta_1 - 993408 \Delta^4 \Delta_1 + 914364 \Delta_1^2 + 3900600 \Delta \Delta_1^2 \\ 
        & + 4702504 \Delta^2 \Delta_1^2 + 2296944 \Delta^3 \Delta_1^2 + 336768 \Delta^4 \Delta_1^2 - 799689 \Delta_1^3 - 2092518 \Delta \Delta_1^3 - 1684440 \Delta^2 \Delta_1^3 \\ 
        &- 450048 \Delta^3 \Delta_1^3 + 391899 \Delta_1^4 + 673314 \Delta \Delta_1^4 + 343312 \Delta^2 \Delta_1^4 + 41280 \Delta^3 \Delta_1^4 - 74427 \Delta_1^5 \\ 
        & - 84018 \Delta \Delta_1^5 - 20280 \Delta^2 \Delta_1^5 + 10761 \Delta_1^6 + 4902 \Delta \Delta_1^6 + 712 \Delta^2 \Delta_1^6\bigg) B_2 + \frac{1}{49152} \bigg(1683456 \Delta^6 \\ 
        &+ 3072 \Delta^5 (3292 - 1669 \Delta_1 + 409 \Delta_1^2) + 768 \Delta^4 (45703 - 32773 \Delta_1 + 17503 \Delta_1^2 - 2532 \Delta_1^3 \\
        &+ 162 \Delta_1^4) + 9 \Delta_1 (-734868 + 1587868 \Delta_1 - 1423299 \Delta_1^2 + 684065 \Delta_1^3 - 130937 \Delta_1^4 + 17171 \Delta_1^5) \\ 
        &+ 64 \Delta^3 (978855 - 1016103 \Delta_1 + 729839 \Delta_1^2 - 190791 \Delta_1^3 + 25799 \Delta_1^4 - 1023 \Delta_1^5 + 29 \Delta_1^6) \\ 
        &+ 8 \Delta^2 (7271424 - 10546749 \Delta_1 + 10186433 \Delta_1^2 - 3976128 \Delta_1^3 +    840752 \Delta_1^4 - 61824 \Delta_1^5 \\ 
        &+ 2156 \Delta_1^6) + 3 \Delta (7167168 - 15881724 \Delta_1 + 20837492 \Delta_1^2 - 11935235 \Delta_1^3 +    3761009 \Delta_1^4 \\ 
        &- 491001 \Delta_1^5 + 28819 \Delta_1^6)\bigg) B_1 + \frac{1}{49152} \bigg(28999296 \Delta + 75605760 \Delta^2 + 85087104 \Delta^3 + 47781120 \Delta^4 \\ 
        &+ 15793152 \Delta^5 + 2617344 \Delta^6 + 344064 \Delta^7 - 8901576 \Delta_1 - 64431288 \Delta \Delta_1 - 115515792 \Delta^2 \Delta_1 \\ 
        &- 92818272 \Delta^3 \Delta_1 - 38587392 \Delta^4 \Delta_1 - 8169984 \Delta^5 \Delta_1 - 1314816 \Delta^6 \Delta_1 + 19491840 \Delta_1^2 \\
        &+ 85511472 \Delta \Delta_1^2 + 114226816 \Delta^2 \Delta_1^2 + 68972000 \Delta^3 \Delta_1^2 + 20325120 \Delta^4 \Delta_1^2 + 3518976 \Delta^5 \Delta_1^2 \\ 
        &+ 208896 \Delta^6 \Delta_1^2 - 17583363 \Delta_1^3 - 49771617 \Delta \Delta_1^3 - 45926220 \Delta^2 \Delta_1^3 - 18616256 \Delta^3 \Delta_1^3 \\ 
        &- 3539200 \Delta^4 \Delta_1^3 - 344064 \Delta^5 \Delta_1^3 + 8397441 \Delta_1^4 + 15592539 \Delta \Delta_1^4 + 9878804 \Delta^2 \Delta_1^4 \\ 
        &+ 2576704 \Delta^3 \Delta_1^4 + 314880 \Delta^4 \Delta_1^4 + 18432 \Delta^5 \Delta_1^4 - 1609353 \Delta_1^5 - 2056035 \Delta \Delta_1^5 - 775652 \Delta^2 \Delta_1^5 \\ 
        &- 102976 \Delta^3 \Delta_1^5 - 9984 \Delta^4 \Delta_1^5 + 205011 \Delta_1^6 + 121329 \Delta \Delta_1^6 + 26924 \Delta^2 \Delta_1^6 + 2880 \Delta^3 \Delta_1^6 + 256 \Delta^4 \Delta_1^6\bigg) \\ \tag{\ref{K4}}
\end{align*}

\end{subequations}
\endgroup

\section{Irreducible representations of $\mathfrak{sl}_2$} \label{appB}

We give a short overview of irreducible representations (shorthand \textit{irreps}) of $\mathfrak{sl}_2$, as examples of horizontal representations of highest weight representations. For simplicity we work in the $m$-basis, which diagonalizes the $J^0_0$ operator. Hence the states are given by $|j,m\rangle$, with diagonal quadratic Casimir operator $C_2 = 2j(j+1)$.
There exist 4 different types of irreps, one finite type with half integer spin $j \in \mathbb{N}/2$ and three infinite types with complex spin $j \in \mathbb{C}$. The three infinite types have either one of the state $|j,\pm j\rangle$ in their space or none, whereas the finite type has both states in its space. In a diagramatic way,

\begin{figure}[H]
    \centering
    \includegraphics[width=12cm]{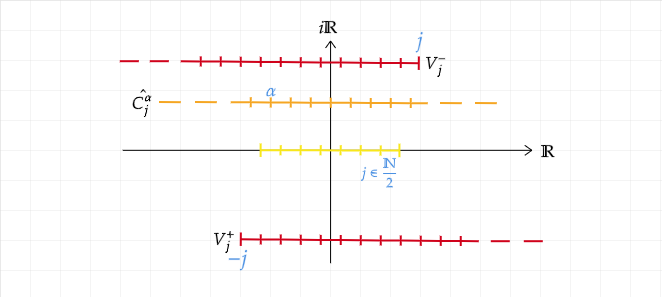}
    \caption{Diagramatic representation of the $J^0_0$-eigenvalue of the four irreducible representations. The yellow coloured line corresponds to half integer spin and is finite (contains $|j,\pm j\rangle$), the red coloured lines correspond to discrete type representations (upper line contains $|j,j \rangle$, lower line $|j,-j\rangle$) and the orange coloured to the continuous type representation (contains none $|j,\pm j\rangle$).}
    \label{fig:sl2reps}
\end{figure}

First we define a general action of the subalgebra $\mathfrak{sl_2} \subset \hat{\mathfrak{sl_2}}$ on the descending vector space,
\begin{equation}
    V^-_j := \mathrm{span}_{\mathbb{C}}(\{|j,m\rangle; j \in \mathbb{C}, m \in j - \mathbb{N}\}).
\end{equation}

The action is defined as,
\begin{gather}
    J^0_0 |j,m\rangle = m |j,m\rangle, \quad J^{\pm}_0 |j,m\rangle = (j\mp m)|j,m\pm 1\rangle.
\end{gather}

This action can be generated by the highest weight state $|j,j\rangle$, which satisfies $ J^+_0|j,j\rangle = 0$.

With the help of the self-inverse automorphism $(\circ)^*: J^0_0 \mapsto - J^0_0, J^{\pm}_0 = - J^{\mp}_0$ we define the ascending vector space implicit via the action $V^+_j := (V^-_j)^*$. Here we use the same notation for the vector space and its representation but mean that we \textit{apply the automorphism action on the descending vector space} and identify $|j,-m\rangle^* = |j,m\rangle$. It is characterised by $m \in -j + \mathbb{N}$ and generated by the lowest weight $|j,-j\rangle^*$ because $0 = (J^-_0)^* |j,j\rangle = J^-_0|j,-j\rangle^*$.

If we restrict on $j \in (-\infty, -1/2)$ we call $D^{\pm}_j := V^{\pm}_j $ the discrete series representations, both are unitary, infinite and irreducible.

The continuous series representation is less constrained and defined via,
\begin{equation}
    C_j^{\alpha} := \mathrm{span}_{\mathbb{C}}(\{|j,m\rangle; j \in -1/2 + i\mathbb{R}_+, \alpha \in \mathbb{R}/\mathbb{Z}, m \in \alpha + \mathbb{Z}\}).
\end{equation}

Where we use the same action as for the discrete series and notice that it has not a highest nor a lowest weight state. It is unitary, infinite and irreducible. Under conjugation we have $(C^{\alpha}_j)^* = C^{-\alpha}_j$.

\end{document}